\newtheorem{Theorem}{Theorem} %numbering by default: (1), (2),...
\author{Sheng Dai
\footnote
{
Department of Information and Service Management, Aalto University School of Business, 02150 Espoo, Finland. Email: \texttt{sheng.dai@aalto.fi}.
}
}
\title{\bf Variable selection in convex quantile regression: $\mathcal{L}_1$-norm or $\mathcal{L}_0$-norm regularization?}
\date{July 2021}
\begin{document}
% revising the caption Figure XX: to Fig. XX.; Table XX: to Table XX.
\captionsetup[figure]{labelfont={bf},labelformat={default},labelsep=period,name={Fig.}}
\captionsetup[table]{labelfont={bf},labelformat={default},labelsep=period,name={Table}}

\citationmode{abbr}
\bibliographystyle{jbes}

\maketitle
 
\vfill
% \begin{center}
% Preprint submitted to \emph{}
% \end{center}
\vfill

\begin{abstract}
The curse of dimensionality is a recognized challenge in nonparametric estimation. This paper develops a new $\mathcal{L}_0$-norm regularization approach to the convex quantile and expectile regressions for subset variable selection. We show how to use mixed integer programming to solve the proposed $\mathcal{L}_0$-norm regularization approach in practice and build a link to the commonly used $\mathcal{L}_1$-norm regularization approach. A Monte Carlo study is performed to compare the finite sample performances of the proposed $\mathcal{L}_0$-penalized convex quantile and expectile regression approaches with the $\mathcal{L}_1$-norm regularization approaches. The proposed approach is further applied to benchmark the sustainable development performance of the OECD countries and empirically analyze the accuracy in the dimensionality reduction of variables. The results from the simulation and application illustrate that the proposed $\mathcal{L}_0$-norm regularization approach can more effectively address the curse of dimensionality than the $\mathcal{L}_1$-norm regularization approach in multidimensional spaces.
\\[5mm]
\noindent{{\bf Keywords}: Variable selection, Convex quantile regression, Regularization, SDG evaluation}
\end{abstract}
\vfill

\thispagestyle{empty}
%%
%%%%%%%%%%%%%%%%%%%%%%%%%%%%%%%%%%%%%%%%%%%%%%%%%%%%%%%%%%%%%%%%%%%%%%%%%%%%%%%%
%%
\newpage
\setcounter{page}{1}
\setcounter{footnote}{0}
\pagenumbering{arabic}
\baselineskip 20pt

%%%%%%%%%%%%%%%%%%%%%%%%%%%%%%%%%%%%%%%%%%%%%%%%
%%
%%
%%
%%%%%%%%%%%%%%%%%%%%%%%%%%%%%%%%%%%%%%%%%%%%%%%%

\section{Introduction}\label{sec:intro}

The curse of dimensionality, a well-known problem in statistics and econometrics, refers to the poor performance of nonparametric methods in a high-dimensional data space (\citename{Lavergne2008}, \citeyear*{Lavergne2008}). The nonparametric estimators are weakened in prediction accuracy and exploratory power as the dimension increases due to the sparsity of data (\citename{Stone2007}, \citeyear*{Stone2007}). This problem has motivated a wide range of literature over the past decades addressing different dimensionality-reduction methods, which can be utilized to reduce the effects of dimensionality  (e.g., \citename{Sinha2015}, \citeyear*{Sinha2015}; \citename{Wilson2018}, \citeyear*{Wilson2018}; \citename{Lee2020}, \citeyear*{Lee2020}; and references therein). However, in the production efficiency analysis area, the curse of dimensionality remains an unresolved issue for nonparametric estimators such as data envelopment analysis (DEA), free disposal hull (FDH), and sign-constrained convex nonparametric least squares (SCNLS) (\citename{Lee2020}, \citeyear*{Lee2020}).

These nonparametric estimators would lead to an overfitting problem when faced with a limited data sample size in multidimensional spaces, especially when the estimated production function approaches the boundary of the convex hull of the production set (\citename{Mazumder2019}, \citeyear*{Mazumder2019}). Accordingly, observations are likely to lie close to or on the production frontier and become more efficient, presumably jeopardizing the accuracy of production function estimation. To date, no prominent method has been made available to directly circumvent these problems caused by the curse of dimensionality.

To address the curse of dimensionality in production models, we could consider either increasing sample sizes or reducing dimensions. Increasing sample sizes aims to resatisfy a ``rules of thumb” relation among the number of observations ($n$) and the number of inputs ($d$) and outputs ($q$), e.g., $n \ge 2(d+q)$ proposed in \citeasnoun{Homburg2001} and $n \ge 2d*q$ proposed in \citeasnoun{Dyson2001}. However, we cannot always enlarge sample sizes in actual applications. Moreover, increasing sample sizes could cause a large computational burden (see, e.g., \citename{Dula2011}, \citeyear*{Dula2011}; \citename{Lee2013}, \citeyear*{Lee2013}). For instance, for the evaluation of the Sustainable Development Goals (SDGs), even when using classical production models, the input-output variables have substantial multicollinearity (e.g., \citename{Lamichhane2020}, \citeyear*{Lamichhane2020}; \citename{Dai2020}, \citeyear*{Dai2020}). It is not applicable to reduce multicollinearity by increasing the number of observations. Thus, increasing sample sizes is not an optimal choice. 

By contrast, dimensionality reduction, which includes feature extraction (i.e., variable extraction) and feature selection (i.e., variable selection) techniques, is a more appealing approach. A vast body of literature on feature extraction for DEA has emerged. One typical choice is to use principal component analysis (PCA) to reduce the dimensionality of DEA production models (e.g., \citename{Adler2010a}, \citeyear*{Adler2010a}; \citename{Nataraja2011}, \citeyear*{Nataraja2011}; \citename{Benitez-Pena2019}, \citeyear*{Benitez-Pena2019}). Another commonly used method is to utilize the multicollinearity among inputs or outputs (\citename{Wilson2018}, \citeyear*{Wilson2018}). However, using the new artificial variables created by the variable extraction technique instead of the original variables cannot lead to the formation of a meaningful DEA-estimated frontier (\citename{Lee2020}, \citeyear*{Lee2020}). Furthermore, it is difficult to identify the production transformation process from the new artificial inputs to the output. To better understand the production transformation process, variable extraction is obviously not a main technique to reduce the dimensionality in production models.

In addition to variable extraction, variable selection is commonly seen in the context of regression. According to a review on variable selection methods by \citeasnoun{Sinha2015}, the mainstream approaches in the statistical modeling and machine learning fields include regularization, stepwise selection, best subset, heuristic algorithms, and Bayesian model averaging. The pros and cons of each approach are well addressed in \citeasnoun{Sinha2015} in detail. More importantly, most of these approaches have been widely applied in production efficiency analysis to handle dimensionality reduction and variable selection.

Stepwise selection (i.e., forward selection and backward elimination) is most frequently used as a variable selection technique in production models, implementing an ex post analysis of the sensitivity of the production frontier to additional variables and detecting whether the relevant variables should be included or removed (e.g., \citename{Pastor2002}, \citeyear*{Pastor2002}; \citename{Wagner2007}, \citeyear*{Wagner2007}). However, the relevance among variables in stepwise selection will affect the final efficiency of the frontier estimation (\citename{Li2017}, \citeyear*{Li2017}). The least absolute shrinkage and selection operator (Lasso)-type regularization (i.e., $\mathcal{L}_1$-norm regularization) is another common technique for choosing a subset of variables (e.g., \citename{Qin2014}, \citeyear*{Qin2014}; \citename{Lee2020}, \citeyear*{Lee2020}; \citename{Chen2021}, \citeyear*{Chen2021}). A few recent attempts have been made to apply integer programming (i.e., a variant of $\mathcal{L}_0$-norm regularization) to achieve direct control over the number of subset variables instead of seeking sparsity in $\mathcal{L}_1$-norm regularization (\citename{Keshvari2018}, \citeyear*{Keshvari2018}; \citename{Benitez-Pena2019}, \citeyear*{Benitez-Pena2019}). However, the existing literature on production models does not address the question of which regularization would be a better subset selection approach via, e.g., an effectiveness comparison.
%EDITOR: Please ensure that the intended meaning has been maintained in the preceding edit.
 Therefore, the effectiveness comparison between the $\mathcal{L}_1$- and $\mathcal{L}_0$-norm regularization approaches in the dimensionality reduction of variables would warrant further research.

This study is also motivated by the application to SDG evaluation. The 2030 Agenda for Sustainable Development, adopted by all 193 United Nations member countries in 2015, 
is a widely acknowledged urgent call for actions with the sustainable aims to end poverty, reduce inequality, boost economic growth and tackle climate change.\footnote{United Nations, The 17 Goals. See more details regarding the SDGs at https://sdgs.un.org/goals.
}
The core of Agenda 2030 comprises the 17 SDGs and 169 targets, representing a global consensus. In addition, under the global indicator framework,\footnote{
United Nations, Global indicator framework for the Sustainable Development Goals and targets of the 2030 Agenda for Sustainable Development. Available at https://unstats.un.org/sdgs/indicators/indicators-list/.
}
231 unique indicators have been designed to measure these targets and monitor progress toward the achievement of the SDGs. Consequently, the integrated assessment approaches based on this high number of indicators are immensely challenging to implement due to the correlation among the indicators and the unit of measurement differences (\citename{Lamichhane2020}, \citeyear*{Lamichhane2020}). The need and challenge to better benchmark the degree of sustainable development of countries triggered the emerging integration of SDG assessment and production economics (e.g., \citename{Singpai}, \citeyear*{Singpai}). 

Furthermore, the heterogeneity regarding the SDG assessment has become another pervasive concern throughout the production function estimation process. Very large differences can be noted in the level, the stage, and the goal of development worldwide. It is thus improper to evaluate the degree of sustainable development of countries following a uniform standard. However, convex quantile regression (CQR) (\citename{Wang2014c}, \citeyear*{Wang2014c}; \citename{Kuosmanen2015d}, \citeyear*{Kuosmanen2015d}) is an alternative that can be used to overcome heterogeneity and that is more robust to outliers, providing an overall picture of the conditional distributions at any given quantile. In contrast to the conventional full frontier estimation approach, CQR can evaluate the actual level of sustainable development locally in the interior of the production possibility set.

In this paper, we have two main purposes: 1) We aim to compare the $\mathcal{L}_1$-norm regularization with $\mathcal{L}_0$-norm regularization and determine which would outperform in subset selection in the context of the production model. In doing so, a Monte Carlo (MC) study and an empirical application are performed to compare the effectiveness of methods in reducing dimensionality. 2) We aim to evaluate the progress of SDGs for OECD countries and to determine what causes the inequality in sustainable development. Hence, extensions to convex quantile and expectile regression using the $\mathcal{L}_1$- and $\mathcal{L}_0$-norm regularization have been developed and applied. The main contributions are thus summarized below.
\begin{enumerate}[label= \arabic*), leftmargin=2\parindent]
    \setlength{\itemsep}{1pt}
    \setlength{\parskip}{0pt}
    \setlength{\parsep}{0pt}
    \item $\mathcal{L}_0$-penalized CQR. To reduce the dimensionality of the production model and increase the robustness to heterogeneity in SDG assessment, we develop a new $\mathcal{L}_0$-norm regularization approach to convex quantile and expectile regression (\citename{Wang2014c}, \citeyear*{Wang2014c}; \citename{Kuosmanen2015d}, \citeyear*{Kuosmanen2015d}), show how to solve it in practice using the mixed integer programming (MIP) method, and build a link with the commonly used $\mathcal{L}_1$-norm regularization approach. By taking advantage of $\mathcal{L}_0$-norm regularization and quantile estimation, we can make full use of each observation and fairly evaluate the degree of SDGs for OECD countries even in a small sample size.
    \item Testing the effectiveness of $\mathcal{L}_1$- and $\mathcal{L}_0$-norm regularization in dimensionality reduction within the context of the production model. $\mathcal{L}_0$-norm regularization directly controls the number of subset variables rather than the level of sparsity uniformly. Does this mean that $\mathcal{L}_0$-norm regularization is a better choice for subset selection? It is thus natural to compare the performance of the proposed $\mathcal{L}_0$-norm regularization with that of the competing $\mathcal{L}_1$-norm regularization. In the MC and application sections, we show that the proposed $\mathcal{L}_0$-norm regularization outperforms in terms of dimensionality reduction and that $\mathcal{L}_1$-norm regularization cannot implicitly eliminate the input variables in the regression-based production function estimation.
    \item Application to SDG assessment in OECD countries. There are few SDG evaluations in the context of production economics. We present an empirical assessment in Section \ref{sec:sdg} from the perspective of production function estimation using the OECD country data from the SDG indicators database.
\end{enumerate}

Furthermore, the penalized convex quantile and expectile regression approaches would be greatly subject to the computational burden from the \textit{O}($n^2$) linear constraints as the number of observations increases. To mitigate the computational burden, we improve upon the base of the cutting-plane algorithm (\citename{Bertsimas2020}, \citeyear*{Bertsimas2020}) to solve the $\mathcal{L}_1$- and $\mathcal{L}_0$-penalized convex quantile and expectile regression approaches. An additional experiment is performed to justify the superiority in performance of the new algorithm over the CNLS-G algorithm proposed by \citeasnoun{Lee2013}. 

The rest of the paper is organized as follows. Section~\ref{sec:meth} introduces the $\mathcal{L}_1$- and $\mathcal{L}_0$-penalized convex quantile and expectile regression approaches. Section~\ref{sec:mc} performs an MC study to compare the finite sample performances of the $\mathcal{L}_1$- and $\mathcal{L}_0$-norm regularization approaches from two metrics. An empirical application to SGD assessment is prepared in Section~\ref{sec:sdg}. Section~\ref{sec:conc} concludes this paper. An introduction to the CNLS-A algorithm and additional tables and figures are provided in the Appendix \ref{sec:anlsa}.

%%%%%%%%%%%%%%%%%%%%%%%%%%%%%%%%%%%%%%%%%%%%%%%%
%%
%%
%%
%%%%%%%%%%%%%%%%%%%%%%%%%%%%%%%%%%%%%%%%%%%%%%%%

\section{Methodology}\label{sec:meth}
\setcounter{equation}{0}

\subsection{Penalized quantile production function}

Consider the nonparametric conditional quantile production functions (e.g., \citename{Dai2020}, \citeyear*{Dai2020}),
\begin{equation}
Q_{y_i}(\tau \, | \, \bx_i) = f_\tau(\bx_i) + F^{-1}_{\varepsilon_i}(\tau), \quad \text{for} \quad i = 1, \cdots, n
\label{eq:qpf}
\end{equation}
where $\tau \in (0, 1)$ denotes the quantile order, $y_i \in \real$ and $\bx_i \in \real^d$ are the output and input variables, respectively, $\varepsilon_i$ denotes the independent and identically distributed errors, and $F_{\varepsilon_i}$ represents the cumulative distribution function of the errors. 

The nonparametric quantile production model \eqref{eq:qpf} does not require the \textit{a priori} functional form for $f_\tau(\cdot):\real^d \xrightarrow{} \real$ but instead assumes that $f_\tau(\cdot)$ satisfies certain axiomatic properties (e.g., monotonicity, concavity). As such, one can readily resort to CQR (\citename{Wang2014c}, \citeyear*{Wang2014c}; \citename{Kuosmanen2015d}, \citeyear*{Kuosmanen2015d}) or the more appealing convex expectile regression (CER) (\citename{Kuosmanen2020}, \citeyear*{Kuosmanen2020}; \citename{Kuosmanen2020b}, \citeyear*{Kuosmanen2020b}) to estimate the quantile production functions. 

To reduce the dimensions of input variables in the production model, as in \citeasnoun{Lee2020}, we introduced the $\mathcal{L}_1$-norm regularization, known as Lasso (\citename{Tibshirani1996}, \citeyear*{Tibshirani1996}), for the nonparametric quantile production function.
\begin{equation}
\hat{Q}(\tau \, | \, \bx_i) = \operatorname*{arg\,min}_{f_\tau \in \Fs}\sum^{n}_{i=1}\rho_\tau (y_i - f_\tau(\bx_i)) + \lambda \left\Vert P(f_\tau) \right\Vert_1
\label{eq:pqpf1}
\end{equation}
where $\rho_\tau(t) = (\tau -1\{t \le 0\})t$ is the check function (\citename{Koenker1978}, \citeyear*{Koenker1978}), $\lambda \ge 0$ is the tuning parameter, and $\left\Vert P(f_\tau) \right\Vert_1$ denotes the Lasso regularization term.

To date, although the Lasso regularization has been predominant in the literature because of its computational advantage, several recent studies have pointed out that Lasso regularization is more of a variable screening than a variable selection procedure (e.g., \citename{Fan2010}, \citeyear*{Fan2010}; \citename{Su2017}, \citeyear*{Su2017}; \citename{Bertsimas2017}, \citeyear*{Bertsimas2017}). To effectively select the subset, another regularization $\mathcal{L}_0$ has been applied and proven to outperform the Lasso regularization, especially in a high signal-to-noise (SNR) regime (see, e.g., \citename{Bertsimas2016}, \citeyear*{Bertsimas2016}; \citename{Hastie2020}, \citeyear*{Hastie2020}). Compared with Lasso regularization, which uniformly shrinks the variables, $\mathcal{L}_0$-norm regularization sparsifies the variables only with the desired shrinking (\citename{Bertsimas2020b}, \citeyear*{Bertsimas2020b}).

We therefore further considered the $\mathcal{L}_0$-norm regularization for the nonparametric quantile production function. The $\mathcal{L}_0$-norm regularization formulation
\begin{alignat}{2}
\label{eq:pqpf2}
\hat{Q}(\tau \, | \, \bx_i) & = \operatorname*{arg\,min}_{f_\tau \in \Fs}\sum^{n}_{i=1}\rho_\tau (y_i - f_\tau(\bx_i)) &{}& \\ 
\mbox{\textit{s.t.}}\quad
& \left\Vert P(f_\tau) \right\Vert_0 \le k \notag
\end{alignat}
where $k$ represents the subset size and $\left\Vert P(f_\tau) \right\Vert_0$ denotes the $\mathcal{L}_0$-pseudonorm regularization term. Alternatively, one could apply a Lagrangian form similar to that of an Lasso regularization, that is, $\hat{Q}(\tau \, | \, \bx_i) = \operatorname*{arg\,min}_{f_\tau \in \Fs}\sum^{n}_{i=1}\rho_\tau (y_i - f_\tau(\bx_i)) + \lambda \left\Vert P(f_\tau) \right\Vert_0$. In this paper, we used an explicit formulation to ensure that the size of the support remains bounded to decrease the computational cost. To estimate the penalized quantile production models \eqref{eq:pqpf1} and \eqref{eq:pqpf2}, we propose two new penalized CQR approaches by imposing the $\mathcal{L}_1$- and $\mathcal{L}_0$-norm regularization procedures and elaborate on both methods in the following sections.

\subsection{\texorpdfstring{$\mathcal{L}_1$}{}-penalized convex quantile regression}\label{sec:l1cqr}

While the CQR estimates the full set of variables, the $\mathcal{L}_1$-penalized CQR ($\mathcal{L}_1$-CQR) aims to retain the true variables and estimate the subset selection by shrinking some coefficients and setting the others to 0. Given the prespecified tuning parameter $\lambda$, we extended the CQR problem to estimate the $\mathcal{L}_1$-CQR as follows:
\begin{alignat}{2}
 \underset{\mathbf{\alpha},\mathbf{\bbeta },{{\mathbf{\varepsilon }}^{\text{+}}},{{\mathbf{\varepsilon }}^{-}}}{\mathop{\min }}&\,\tau \sum\limits_{i=1}^{n}{\varepsilon _{i}^{+}}+(1-\tau )\sum\limits_{i=1}^{n}{\varepsilon _{i}^{-}} + \lambda \left\Vert \bbeta_i \right\Vert_1  &{}& \label{eq:pcqr}\\
\mbox{\textit{s.t.}}\quad
& y_i=\mathbf{\alpha}_i+ \bbeta_i^{'}\bx_i+\varepsilon^{+}_i - \varepsilon^{-}_i &\quad& \forall i \notag\\
& \mathbf{\alpha}_i+\bbeta_{i}^{'}{{\bx}_{i}}\le \mathbf{\alpha}_h+\bbeta _h^{'}\bx_i  &{}& \forall i,h  \notag\\
& \bbeta_i\ge \bzero &{}& \forall i  \notag\\
& \varepsilon _i^{+}\ge 0,\ \varepsilon_i^{-} \ge 0 &{}& \forall i \notag
\end{alignat}

The $\mathcal{L}_1$-CQR problem differs from the CQR problem in that the objective function adds an extra $\mathcal{L}_1$-norm regularization to $\left\Vert \bbeta_i \right\Vert_1 $, where $\left\Vert \bbeta_i \right\Vert_1 = \sum^d_j\sum^n_i\left| \beta_{j, i} \right|$. Similar to the Lasso regression, we can recover the original CQR problem as $\lambda \xrightarrow{} 0$ and remove the corresponding input variables with $\beta_{j, i}=0$ for all observations as $\lambda \xrightarrow{} \infty$. The first constraint in Problem \eqref{eq:pcqr} can be interpreted as a multivariate regression equation. The second constraint, i.e., the system of Afriat inequality, imposes the convexity. The third constraint imposes the monotonicity, and the last is the sign constraint of the error terms.

Similar to the connection between CQR and CER problems, the $\mathcal{L}_1$-penalized CER ($\mathcal{L}_1$-CER) approach is a more compelling alternative to the $\mathcal{L}_1$-CQR problem due to the nonuniqueness of linear programming problem \eqref{eq:pcqr}. Compared with the $\mathcal{L}_1$-CQR problem, the $\mathcal{L}_1$-CER problem uses an alternative quadratic objection function to ensure a unique solution to the following problem. 
\begin{alignat}{2}
 \underset{\mathbf{\alpha},\mathbf{\bbeta},{{\mathbf{\varepsilon }}^{\text{+}}},{\mathbf{\varepsilon }}^{-}}{\mathop{\min}}&\,\tilde{\tau} \sum\limits_{i=1}^n(\varepsilon _i^{+})^2+(1-\tilde{\tau} )\sum\limits_{i=1}^n(\varepsilon_i^{-})^2 + \lambda \left\Vert \bbeta_i \right\Vert_1  &{}& \label{eq:pcer}
\end{alignat}
where the $\mathcal{L}_1$-CER problem is subject to the same constraints as those of the $\mathcal{L}_1$-CQR problem. The expectile $\tilde{\tau}$ in \eqref{eq:pcer} and quantile $\tau$ in \eqref{eq:pcqr} can be converted to each other; see more discussions in \citeasnoun{Kuosmanen2020}. In practice, the $\mathcal{L}_1$-CQR problem \eqref{eq:pcqr} and $\mathcal{L}_1$-CER \eqref{eq:pcer} problem can be solved directly by commercial off-the-shelf solvers (e.g., Cplex, Mosek, and Gurobi) or standalone algorithms (e.g., \citename{Mazumder2019}, \citeyear*{Mazumder2019}; \citename{Lin2020} \citeyear*{Lin2020}). 

We notice that \citeasnoun{Lee2020} propose a similar Lasso-penalized DEA model (i.e., Lasso-SCNLS), which is, in essence, a special variant of the $\mathcal{L}_1$-CER model, i.e., expectile $\tilde{\tau} = 0.5$. However, in their study, it is unclear whether this Lasso-style penalty would necessarily reduce the dimensionality (set all $\beta_{k,i} = 0$ for some $k$) (\citename{Chen2021}, \citeyear*{Chen2021}). Although \citeasnoun{Lee2020} present some simulations that suggest improvement in the mean squared error (MSE) compared with that obtained from other DEA-related approaches, they fail to justify with convincing evidence whether this Lasso-style penalty effectively helps to reduce dimensionality. However, according to this study, the Lasso-SCNLS approach could eliminate the true irrelevant variables to some extent. Yet, the complete evidences (e.g., prediction error metric or accuracy metric) to support this argument would warrant further research.

Note further that the quadratic objective function in the $\mathcal{L}_1$-CER problem can ensure a unique estimated quantile function ($\hat{Q}$) but cannot guarantee the uniqueness of the subgradients (i.e., the estimated coefficients $\hat{\beta}_{i,j}$). To obtain a unique minimizer, denoted by ($\hat{Q}$, $\hat{\beta}_{i,j}$) in Problem \eqref{eq:pcqr} or \eqref{eq:pcer}, one could apply an alternative $\mathcal{L}_1$-norm squared regularization to the subgradients for the original CQR problem. Note that in this case, the convexity of the objective function becomes stronger, and both estimated quantile functions and estimated subgradients are then unique, a detailed proof of which would be an interesting topic for future research.

\subsection{\texorpdfstring{$\mathcal{L}_0$}{}-penalized convex quantile regression}\label{sec:l0cqr}

Considering the potential advantages of $\mathcal{L}_0$-norm regularization in subset selection, we extended the $\mathcal{L}_1$-CQR problem to the cardinality penalized CQR problem. The proposed $\mathcal{L}_0$-penalized CQR approach ($\mathcal{L}_0$-CQR) is defined as
\begin{alignat}{2}
 \underset{\mathbf{\alpha},\mathbf{\bbeta},{\mathbf{\varepsilon }^{\text{+}}},{\mathbf{\varepsilon}^{-}}}{\mathop{\min }}&\,\tau \sum\limits_{i=1}^{n}{\varepsilon _{i}^{+}}+(1-\tau )\sum\limits_{i=1}^{n}{\varepsilon _{i}^{-}} &{}&  \label{eq:pcqr2}\\ 
\mbox{\textit{s.t.}}\quad
& y_i=\mathbf{\alpha}_i+ \bbeta_i^{'}\bx_i+\varepsilon^{+}_i - \varepsilon^{-}_i &\quad& \forall i \notag\\
& \mathbf{\alpha}_i+\bbeta_{i}^{'}{{\bx}_{i}}\le \mathbf{\alpha}_h+\bbeta _h^{'}\bx_i  &{}& \forall i,h \notag \\
& \bbeta_i\ge \bzero &{}& \forall i \notag \\
& \varepsilon _i^{+}\ge 0,\ \varepsilon_i^{-} \ge 0 &{}& \forall i \notag \\
& \left\Vert \bbeta_i \right\Vert_0 \le k   &{}& \forall i \notag
\end{alignat}
where $\left\Vert \bbeta_i \right\Vert_0 := \left|j: \beta_{j, i} \neq 0 \right|, \forall j \in [d]$, denotes the $\mathcal{L}_0$ pseudonorm, indicating the number of nonzero coefficients of $\bbeta_i$. Accordingly, the last constraint in Problem \eqref{eq:pcqr2} limits the number of input variables to at most $k$. If $k \ge d$, the optimal solution to the $\mathcal{L}_0$-CQR problem \eqref{eq:pcqr2} is equivalent to the optimal solution to the original CQR problem because the cardinality constraint is not binding; if $k < d$, there are exactly $k$ input variables that contribute to the estimated quantile function. 

Although Problem \eqref{eq:pcqr2} is theoretically appealing, it is an NP-hard optimization problem (\citename{Natarajan1995}, \citeyear*{Natarajan1995}; \citename{Bertsimas2016}, \citeyear*{Bertsimas2016}). To solve Problem \eqref{eq:pcqr2} in practice, we reformulated it into a mixed integer linear programming (MILP) problem. We defined the binary variable $\bz$ to determine what the hyperplanes denoted by the pair of coefficients ($\alpha_i$, $\bbeta_i$) look like. Given an a priori-specified $k$, the MILP formulation for the $\mathcal{L}_0$-CQR problem is
\begin{alignat}{2}
 \underset{\mathbf{\alpha},\mathbf{\bbeta},{{\mathbf{\varepsilon }}^{\text{+}}},{\mathbf{\varepsilon }}^{-}, \bz}{\mathop{\min}}&\,\tau \sum\limits_{i=1}^{n}{\varepsilon _{i}^{+}}+(1-\tau )\sum\limits_{i=1}^{n}{\varepsilon _{i}^{-}} &{}& \label{eq:pcqr3}\\ 
\mbox{\textit{s.t.}}\quad
& y_i=\mathbf{\alpha}_i+ \bbeta_i^{'}\bx_i+\varepsilon^{+}_i - \varepsilon^{-}_i &\quad& \forall i \notag \\
& \mathbf{\alpha}_i+\bbeta_{i}^{'}{{\bx}_{i}}\le \mathbf{\alpha}_h+\bbeta _h^{'}\bx_i  &{}& \forall i,h \notag\\
& \bbeta_i\ge \bzero &{}& \forall i \notag\\
& \varepsilon _i^{+}\ge 0,\ \varepsilon_i^{-} \ge 0 &{}& \forall i \notag\\
& \left| \bbeta_{i} \right| \le M\bz &{}& \forall i \notag\\
& \bz \in \{0, 1 \}^d   \notag\\
& \sum^{d}_{j=1} z_j \le k \notag
\end{alignat}
where $M$ is a predetermined positive number that can be either estimated from the variants of the CQR problem (c.f., \citename{Bertsimas2020}, \citeyear*{Bertsimas2020}) or enumerated from a range of candidates. Note that a smaller $M$ could make the solution too constrained and that a larger $M$ may lead to overfitting. For the $\mathcal{L}_0$-CER problem, we can replace the objective function of \eqref{eq:pcqr3} with the quadratic objective function $\tilde{\tau} \sum\limits_{i=1}^n(\varepsilon _i^{+})^2+(1-\tilde{\tau} )\sum\limits_{i=1}^n(\varepsilon_i^{-})^2$ to ensure the uniqueness, which is now a mixed integer quadratic programming (MIQP) problem. We then employed a mixed integer optimization solver (e.g., Cplex and Gurobi) to solve the $\mathcal{L}_0$-CQR/CER problem.

\begin{Theorem}
Problems \eqref{eq:pcqr2} and \eqref{eq:pcqr3} are equivalent.
\label{theorem1}
\end{Theorem}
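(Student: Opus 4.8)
The plan is to prove equivalence by showing that the two problems share the same objective and that their feasible regions coincide once the auxiliary binary vector $\bz$ is projected out. Since the objective $\tau\sum_i \varepsilon_i^+ + (1-\tau)\sum_i \varepsilon_i^-$ and the first four constraint families (the regression identity, the Afriat system enforcing convexity, the monotonicity constraints $\bbeta_i \ge \bzero$, and the sign constraints on the residuals) are literally identical in \eqref{eq:pcqr2} and \eqref{eq:pcqr3}, and none of them involves $\bz$, it suffices to establish that a tuple $(\alpha,\bbeta,\varepsilon^+,\varepsilon^-)$ satisfies the cardinality constraint $\|\bbeta_i\|_0 \le k$ if and only if it can be extended by some $\bz \in \{0,1\}^d$ so that the big-$M$ constraints $|\bbeta_i| \le M\bz$ and $\sum_{j=1}^d z_j \le k$ hold. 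The objective values then automatically agree, so the optimal values and the corresponding optimizers in the $(\alpha,\bbeta,\varepsilon^+,\varepsilon^-)$-coordinates match.

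For the forward direction, given a feasible point of \eqref{eq:pcqr2}, I would define $z_j := 1$ whenever input $j$ is retained by the model, i.e. whenever $\beta_{j,i}\neq 0$ for some $i$, and $z_j := 0$ otherwise. Then $\sum_j z_j$ equals the size of the support, which is at most $k$ by the $\mathcal{L}_0$ constraint, and $|\beta_{j,i}| \le M z_j$ holds coordinatewise: when $z_j=0$ both sides vanish, and when $z_j=1$ the inequality reduces to $|\beta_{j,i}| \le M$, which is guaranteed by the choice of $M$ as an upper bound on the coefficient magnitudes. For the reverse direction, given a feasible point of \eqref{eq:pcqr3}, the constraint $|\bbeta_i|\le M\bz$ forces $\beta_{j,i}=0$ for every $i$ as soon as $z_j=0$; hence the support of the model is contained in $\{j: z_j=1\}$, whose cardinality is bounded by $\sum_j z_j \le k$, recovering $\|\bbeta_i\|_0 \le k$. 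This two-way construction closes the argument.

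I expect the single genuinely delicate point to be the role of the big-$M$ constant. The equivalence is exact only if $M$ is chosen no smaller than the largest coefficient magnitude attained at an optimal solution of \eqref{eq:pcqr2}; otherwise the constraint $|\bbeta_i|\le M\bz$ could clip an otherwise feasible (indeed optimal) slope and strictly shrink the projected feasible region, breaking the correspondence. I would therefore state the result under the hypothesis that $M$ is a valid upper bound, consistent with the earlier remark that $M$ can be estimated from variants of the CQR problem or enumerated over candidates, and record the standard caveat that a too-small $M$ yields only a restriction while a too-large $M$ weakens the formulation numerically. A secondary point worth making explicit is that the lone binary vector $\bz$, which is not indexed by $i$, encodes a single common subset of retained inputs, so the cardinality constraint in \eqref{eq:pcqr2} is read as a bound on the global support $\{j:\exists i,\ \beta_{j,i}\neq 0\}$; this is precisely the notion of dimensionality reduction the model targets, and it is what makes the shared-$\bz$ reformulation faithful.
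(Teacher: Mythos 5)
Your proof is correct and follows essentially the same route as the paper's: the standard big-$M$ argument that, provided $M$ bounds the optimal coefficient magnitudes (the paper assumes $\left\Vert \bbeta_i^* \right\Vert_\infty \le M$), the binary vector $\bz$ with $\sum_{j=1}^d z_j \le k$ and $\left| \bbeta_i \right| \le M\bz$ encodes exactly the cardinality constraint. Your write-up is in fact more careful than the paper's on two points it glosses over: the explicit two-way feasibility correspondence (including the caveat that a too-small $M$ yields only a restriction, not an equivalence), and the observation that the single shared $\bz$ forces a common support across all $i$, so the constraint $\left\Vert \bbeta_i \right\Vert_0 \le k\ \forall i$ must be read as a bound on the global support $\{j : \exists i,\ \beta_{j,i}\neq 0\}$ for the equivalence to be exact.
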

\begin{proof}
Let $\bbeta_i^*$ be the optimal solution to Problem \eqref{eq:pcqr2} that satisfies $\left\Vert \bbeta_i^* \right\Vert_\infty \le M$. After introducing the binary variable $\bz$, we can obtain a linear constraint on $\bz$ by imposing the cardinality constraint on $\bbeta_i$, $\bz:=\{ z \in \{0, 1\}^d: \be^{'}z \le k \}$, where all components in vector $\be$ are equal to one. Furthermore, $M\bz=M\sum^d_jz_j=M\sum^d_{j:z_j=1}z_j=M\bz_{z_j} $. Thus, $\left| \bbeta_i \right| \le M$ if $z_j=1$; otherwise, $\bbeta_i=0$. The last constraint of Problem \eqref{eq:pcqr2} can be equivalently expressed as the last three constraints of Problem \eqref{eq:pcqr3}. 
\end{proof}

Apart from the computational convenience, the explicit constraint formulation \eqref{eq:pcqr3} provides a link to the $\mathcal{L}_1$-CQR problem \eqref{eq:pcqr}. We described here the relation between the $\mathcal{L}_1$-CQR and $\mathcal{L}_0$-CQR problems. Let Conv($A$) denote the convex hull of the set of $A$. Similarly to \citeasnoun{Kuosmanen2008}, we set the $\mathcal{L}_0$-CQR problem to the function $\hat{Q}: \real^d \xrightarrow{} \real$:
\begin{alignat}{2}
    \hat{Q}(\tau \, | \, \bx):=\sup\Big\{\sum^n_{i=1}a_i\hat{y}_{\tau,i} : \sum^n_{i=1}a_i=1, \sum^n_{i=1}a_i\bx_i=\bx, a_i \ge 0, \notag \\
                                \left| \bbeta_i \right| \le M\bz, \bz \in \{0, 1\}, \sum^{d}_{j=1} z_j \le k \Big\}
     \label{eq:rep_l0}
\end{alignat}
where $a_i$ is the weight assigned to observation $i$ and $\sup(\emptyset) = -\infty$. $\hat{Q}$ is thus well defined and finite on Conv($\bx$). Using these settings, Theorem~\ref{theorem2} shows that the optimum objective value of the $\mathcal{L}_0$-CQR problem is low-bounded by that of the $\mathcal{L}_1$-CQR problem.

\begin{Theorem}
The optimal solution to Problem \eqref{eq:pcqr} is greater than or equal to that of Problem \eqref{eq:pcqr3}.
\label{theorem2}
\end{Theorem}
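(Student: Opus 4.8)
Let $\mathcal{F}$ denote the common feasible polyhedron in the variables $(\alpha_i,\bbeta_i,\varepsilon_i^{+},\varepsilon_i^{-})$ carved out by the four constraints shared by both programs: the regression identities, the system of Afriat inequalities enforcing convexity, the monotonicity constraints $\bbeta_i\ge\bzero$, and the sign constraints on the residuals. Write $C(\varepsilon^{+},\varepsilon^{-}):=\tau\sum_{i=1}^{n}\varepsilon_i^{+}+(1-\tau)\sum_{i=1}^{n}\varepsilon_i^{-}$ for the check-function loss common to both objectives. Then Problem~\eqref{eq:pcqr} is $\min_{\mathcal{F}}\,\{\,C+\lambda\lVert\bbeta_i\rVert_1\,\}$, and by Theorem~\ref{theorem1} Problem~\eqref{eq:pcqr3} is equivalent to $\min\,\{\,C:\ (\alpha_i,\bbeta_i,\varepsilon_i^{\pm})\in\mathcal{F},\ \lVert\bbeta_i\rVert_0\le k\,\}$. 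Denote the two optimal values by $V_1$ and $V_0$; the target inequality is $V_1\ge V_0$.

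The plan is to produce, from any optimizer $(\alpha^{*},\bbeta^{*},\varepsilon^{*})$ of~\eqref{eq:pcqr}, a point that is feasible for~\eqref{eq:pcqr3} and whose loss is at most $V_1$, so that optimality of $V_0$ closes the argument. First I would dispose of the easy regime: if the $\mathcal{L}_1$-optimal subgradients already satisfy $\lVert\bbeta^{*}_i\rVert_0\le k$ (which the shrinkage nature of the $\mathcal{L}_1$ penalty encourages, and which always holds once $k\ge d$ so that the cardinality constraint is slack), then $(\alpha^{*},\bbeta^{*},\varepsilon^{*})$ is itself feasible for~\eqref{eq:pcqr3}, and since $\lambda\lVert\bbeta^{*}_i\rVert_1\ge 0$ we get immediately $V_1=C(\varepsilon^{*})+\lambda\lVert\bbeta^{*}_i\rVert_1\ge C(\varepsilon^{*})\ge V_0$.

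For the general regime I would hard-threshold $\bbeta^{*}$ to the $k$ input directions carrying the largest aggregate penalty mass, set the binary vector $\bz$ to select exactly those directions, and re-optimize the intercepts and residuals over the retained support. The representation~\eqref{eq:rep_l0} is the natural device for this last step, since it expresses the restricted fit as a support function over the convex-combination set, allowing the re-optimized loss $\widetilde{C}$ to be controlled. The resulting triple is feasible for~\eqref{eq:pcqr3}, whence $V_0\le\widetilde{C}$; it then remains to show $\widetilde{C}\le C(\varepsilon^{*})+\lambda\lVert\bbeta^{*}_i\rVert_1=V_1$.

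That last estimate is where I expect the real difficulty to lie. Zeroing the $d-k$ least important directions inflates the residuals, so $\widetilde{C}\ge C(\varepsilon^{*})$, and the increase must be absorbed by the discarded penalty mass $\lambda\sum_{j\ \text{dropped}}\sum_i\lvert\beta^{*}_{j,i}\rvert$. Bounding the loss increment by this penalty is exactly the calibration between the tuning parameter $\lambda$ and the cardinality $k$ familiar from the Lasso-versus-best-subset comparison: the convexity of $C$ together with the Afriat structure gives a one-sided Lipschitz control of the loss in the dropped coordinates, but turning this into the clean inequality $\widetilde{C}-C(\varepsilon^{*})\le\lambda\sum_{j\ \text{dropped}}\sum_i\lvert\beta^{*}_{j,i}\rvert$ requires $\lambda$ to be chosen commensurately with $k$ (equivalently, it is automatic whenever the $\mathcal{L}_1$ optimum is already $k$-sparse). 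I would therefore either state the result under such a calibration or restrict attention to the assumption-free case $k\ge\lVert\bbeta^{*}_i\rVert_0$ handled above, where the nonnegativity of the penalty alone yields $V_1\ge V_0$.
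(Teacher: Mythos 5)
Your argument is complete only in the regime you call easy: if an optimizer of \eqref{eq:pcqr} already satisfies $\lVert\bbeta^{*}_i\rVert_0\le k$ (in particular when $k\ge d$), then the chain $V_1=C(\varepsilon^{*})+\lambda\lVert\bbeta^{*}_i\rVert_1\ge C(\varepsilon^{*})\ge V_0$ is correct and needs nothing more. In the general regime the decisive estimate $\widetilde{C}-C(\varepsilon^{*})\le\lambda\sum_{j\,\text{dropped}}\sum_i\lvert\beta^{*}_{j,i}\rvert$ is only announced, not proved, and you are right that it cannot be proved for arbitrary parameters: as $\lambda\to 0$ the optimum of \eqref{eq:pcqr} tends to the unconstrained CQR optimum, which is a lower bound for the cardinality-constrained optimum of \eqref{eq:pcqr3}, so $V_1\ge V_0$ fails whenever the cardinality constraint actually binds. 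As a proof of the theorem as literally stated --- an arbitrary $\lambda$ against an arbitrary pair $(k,M)$ --- your construction therefore has a genuine gap, though you have located it precisely.

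The paper's own proof takes a different route and never touches an optimizer of \eqref{eq:pcqr}: it passes to the constrained representations \eqref{eq:rep_l0}--\eqref{eq:rep_l1}, observes that the $\mathcal{L}_0$ constraints $\lvert\bbeta_i\rvert\le M\bz$, $\sum_{j}z_j\le k$ imply $\lVert\bbeta_i\rVert_\infty\le M$ and $\lVert\bbeta_i\rVert_1\le Mk$, rewrites the $\mathcal{L}_1$ problem as the constrained problem over $\lVert\bbeta_i\rVert_1\le Mk$ alone, and orders the optima by inclusion of feasible sets (the standard convex-relaxation argument for best-subset selection). But this merely relocates the difficulty you identified: equating the penalized problem \eqref{eq:pcqr} with the constrained problem of radius $Mk$ is exactly the $\lambda\leftrightarrow Mk$ calibration you flag, and no such correspondence is established there either. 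So your proposal and the paper's proof founder on the same missing step; yours has the merit of making it explicit and of isolating the one unconditional case (cardinality constraint slack at an $\mathcal{L}_1$ optimizer), under which the nonnegativity of the penalty alone yields the claimed inequality.
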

\begin{proof}
Similar to \citeasnoun{Bertsimas2016}, we first reformulated Conv($\bx$) \eqref{eq:rep_l0} into a relaxation form:
\begin{alignat}{2}
    \hat{Q}_1(\tau \, | \, \bx):&=\sup\Big\{\sum^n_{i=1}a_i\hat{y}_{\tau,i}: \sum^n_{i=1}a_i=1, \sum^n_{i=1}a_i\bx_i=\bx, a_i \ge 0, 
                                 \left| \bbeta_i \right| \le M\bz, \bz \in \{0, 1\}, \sum^{d}_{j=1} z_j \le k \Big\} \notag \\
                      &= \sup\Big\{\sum^n_{i=1}a_i\hat{y}_{\tau,i}: \sum^n_{i=1}a_i=1, \sum^n_{i=1}a_i\bx_i=\bx, a_i \ge 0, 
                                 \left\Vert \bbeta_i \right\Vert_\infty \le M, \left\Vert \bbeta_i \right\Vert_1 \le Mk \Big\}
   \label{eq:rep2_l0}
\end{alignat}
We then rewrote Problem \eqref{eq:pcqr} in constrained form as the convex hull:
\begin{alignat}{2}
    \hat{Q}_2(\tau \, | \, \bx):= \sup\Big\{\sum^n_{i=1}a_i\hat{y}_{\tau,i}: \sum^n_{i=1}a_i=1, \sum^n_{i=1}a_i\bx_i=\bx, a_i \ge 0, 
                                 \left\Vert \bbeta_i \right\Vert_1 \le Mk \Big\} 
    \label{eq:rep_l1}
\end{alignat}
It is straightforward to show the link between $\mathcal{L}_1$-CQR and $\mathcal{L}_0$-CQR: $\hat{Q}_1 \subseteq \hat{Q}_2$. We thus find the inequality below: $\hat{Q}_1 \ge \hat{Q}_2$. That is, Problem \eqref{eq:pcqr3} provides lower bounds for Problem \eqref{eq:pcqr}. Note that we aim to find the minimum objective function values of the $\mathcal{L}_1$-CQR and $\mathcal{L}_0$-CQR problems. 
\end{proof}

In practice, there is one concern rooted in the original CQR problem when using the $\mathcal{L}_1$-CQR and $\mathcal{L}_0$-CQR approaches. Estimating a larger sample becomes excessively expensive in both the CQR and CER approaches due to the \textit{O}($n^2$) linear constraints (\citename{Lee2013}, \citeyear*{Lee2013}; \citename{Mazumder2019}, \citeyear*{Mazumder2019}). For example, if the data samples contain 300 observations, the total number of linear constraints amounts to 90000. Undoubtedly, the computational burden becomes a barrier to the application of the $\mathcal{L}_1$-CQR and $\mathcal{L}_0$-CQR approaches under larger sample sizes and calls for a more efficient algorithm. 

Recent works are showing several promising algorithms from the perspective of computational burden (see, e.g., \citename{Lee2013}, \citeyear*{Lee2013}; \citename{Balazs2015},\citeyear*{Balazs2015}; \citename{Mazumder2019}, \citeyear*{Mazumder2019}; \citename{Bertsimas2020}, \citeyear*{Bertsimas2020}; \citename{Lin2020} \citeyear*{Lin2020}). It is worth highlighting two of them here, namely, the CNLS-G algorithm proposed by \citeasnoun{Lee2013} and the cutting-plane algorithm proposed by \citeasnoun{Balazs2015} and extended by \citeasnoun{Bertsimas2020}. Although both algorithms use the relaxed Afriat inequality constraint set and iteratively introduce new ones as necessary, the cutting-plane algorithm is more efficient to solve the CQR and CER problems due to its fast identification of violating constraints. We thus improved on the base of the cutting-plane algorithm in \citeasnoun{Bertsimas2020} to solve Problems \eqref{eq:pcqr} and \eqref{eq:pcqr3} and referred to it as the CNLS-adapted cutting-plane (CNLS-A) algorithm. In the Appendix \ref{sec:anlsa}, we introduce the CNLS-A algorithm in detail and perform a computational test to compare the performances of the CNLS-G and CNLS-A algorithms in solving the CQR and CER problems.

%%%%%%%%%%%%%%%%%%%%%%%%%%%%%%%%%%%%%%%%%%%%%%%%
%%
%%
%%
%%%%%%%%%%%%%%%%%%%%%%%%%%%%%%%%%%%%%%%%%%%%%%%%

\section{Monte Carlo study}\label{sec:mc}
\setcounter{equation}{0}

\subsection{Setup}

We present an MC study to compare the finite sample performances in reducing the dimensionality between the $\mathcal{L}_1$- and $\mathcal{L}_0$-penalized CQR/CER models. Following \citeasnoun{Lee2013}, we considered an additive Cobb--Douglas production function with $d$ input variables and one output variable. Given $n$ (number of observations), $d$ (number of input variables), $k$ (number of true input variables), $\rho$ (the SNR level), and $\tau$ (the quantile level), our experimental process was as illustrated below:
\begin{enumerate}[label= \arabic*), leftmargin=2\parindent]
    \setlength{\itemsep}{1pt}
    \setlength{\parskip}{0pt}
    \setlength{\parsep}{0pt}
    \item we defined the following parameters to generate the artificial data: $n \in \{100, 500\}$, $d \in \{6, 8, 10, 12\}$, $\rho \in \{0.5, 2, 10\}$, and $\tau \in \{0.1, 0.3, 0.5, 0.7, 0.9\}$;
    \item we randomly sampled a support set ($\omega$) of size $k$ from $\{1, 2, \cdots, d\}$ and considered that the sizes of the true support set ($\omega^*$) are 2 and 4, respectively;
    \item we drew the input variables $\bx_i \in \real^{n \times d}$ i.i.d. from a uniform distribution, $U[1, 10]$;
    \item the observed output $y_i = \prod_{d=1}^{D \in \omega^*}\bx^{\frac{0.8}{d}}_{d,i} + v_i$. For each noise, $v_i$ was randomly sampled from the normal distribution, $N(0, \sigma^2)$. $\sigma^2$ was calculated based on the SNR level, $\sigma^2 = var(\prod_{d=1}^{D \in \omega^*}\bx^{\frac{0.8}{d}}_{d,i}) / \rho $ (\citename{Bertsimas2016}, \citeyear*{Bertsimas2016});
    \item we conducted the $\mathcal{L}_1$-CQR, $\mathcal{L}_0$-CQR, $\mathcal{L}_1$-CER, and $\mathcal{L}_0$-CER on the data $\bx_i$, $y$, with the optimal tuning parameters; for each approach, the tuning parameter was selected by a 5-fold cross-validation procedure from a wide range of values (\citename{Bertsimas2020}, \citeyear*{Bertsimas2020}; \citename{chen2020c}, \citeyear*{chen2020c}; \citename{Chen2021a}, \citeyear*{Chen2021a});
    \item we measured the finite sample performance of each approach using the following two statistics: prediction error and accuracy;
    \item we replicated steps 1)-6) 10 times and averaged the statistics.
\end{enumerate}

To assess the finite sample performance of each method, we utilized the prediction error and accuracy statistics, which are, respectively, the level of good data fidelity and the proportion of true variables that were selected. For a penalized quantile function estimate, the prediction error statistic is defined as (\citename{Bertsimas2016}, \citeyear*{Bertsimas2016})
\begin{equation*}
\text{Prediction error} = \left\Vert \hat{Q}_i - Q^{\omega^*}_i \right\Vert^2_2 / \left\Vert Q^{\omega^*}_i \right\Vert^2_2
\end{equation*}
where $ \hat{Q}_i$ denotes the estimated conditional penalized quantile function, identified by a particular approach, and $Q^{\omega^*}_i$ indicates the true conditional penalized quantile function. Note that for the prediction error statistic, an ``in-sample" version is used in this paper, where the smaller the prediction error is, the better the performance will be. Furthermore, we reported the proportion of true variables that were selected, that is, the accuracy (\%) (\citename{Bertsimas2020}, \citeyear*{Bertsimas2020})
\begin{equation*}
\text{Accuracy} = \frac{\left|\hat{\omega}\cap\omega^*\right|}{k} \times 100\%
\end{equation*}

Regarding the tuning parameter selection, by using a 5-fold cross-validation procedure (see, e.g., \citename{Bertsimas2020}, \citeyear*{Bertsimas2020}; \citename{chen2020c}, \citeyear*{chen2020c}; \citename{Chen2021a}, \citeyear*{Chen2021a}), we calibrated the parameter $\lambda$ over a sequence of 100 values for the $\mathcal{L}_1$-CQR/CER approach. For the $\mathcal{L}_0$-norm regularization, we varied the tuning value $M$ from the multiplier set $\{0.1, 0.5, 1, 1.5, 2, 3, 4, 5\}$ for the $\mathcal{L}_0$-CQR/CER approach, and we chose the subset size $k$ from the set $\{1, \cdots, d-1\}$ (i.e., at least one input). Note that the 5-fold cross-validation technique needs to solve each $\mathcal{L}_1$/$\mathcal{L}_0$-CQR/CER problem five times for each candidate value; a faster procedure for finding the tuning parameters for a similar optimization problem can be seen in \citeasnoun{chen2020c}.

In all the experiments that follow, we resorted to Julia/JuMP (\citename{Dunning2017}, \citeyear*{Dunning2017}) to solve the $\mathcal{L}_1$- and $\mathcal{L}_0$-CQR/CER models based on the CNLS-A algorithm. We used the commercial off-the-shelf solver Gurobi (9.1.2) to solve the CNLS-A algorithm. All experiments were performed on Aalto University's high-performance computing cluster Triton with Xeon @2.8 GHz processors, 5 CPUs, and 5 GB RAM per CPU. Since the MC study is naturally parallel, we resorted to the multiple threads in Julia to speed up the computation. The skeleton codes are publicly available at \url{https://github.com/ds2010/penalizedCQR}.

\subsection{Prediction error}\label{sec:pe}

We first explored the prediction error of the $\mathcal{L}_1$- and $\mathcal{L}_0$-norm regularization approaches solved by the CNLS-A algorithm. For the comparison of the quantile and expectile approaches, the expectile $\tilde{\tau}$ was transformed into the corresponding quantile $\tau$ based on the empirical inverse quantile functions of the noise ($\varepsilon_i$).

Figs.~\ref{fig:fig1} and \ref{fig:fig2} depict how the prediction error scales as the sample sizes, input dimensions, and quantiles increase when estimating the $\mathcal{L}_1$-CQR and $\mathcal{L}_0$-CER problems with the observation $n=100$ but different sizes of the true support set (i.e., $k=2$ and $k=4$). Recall that the smaller the prediction error is, the better the performance. Combined with Figs.~\ref{fig:figb1} and \ref{fig:figb2}, $\mathcal{L}_0$-CQR shows a major advantage over $\mathcal{L}_1$-CQR in quantile production function estimation when applying the subset selection procedure. The main findings are summarized as follows.
\begin{figure}[!htbp]
    \vspace{-1em}
    \centering
    \includegraphics[width=1\textwidth]{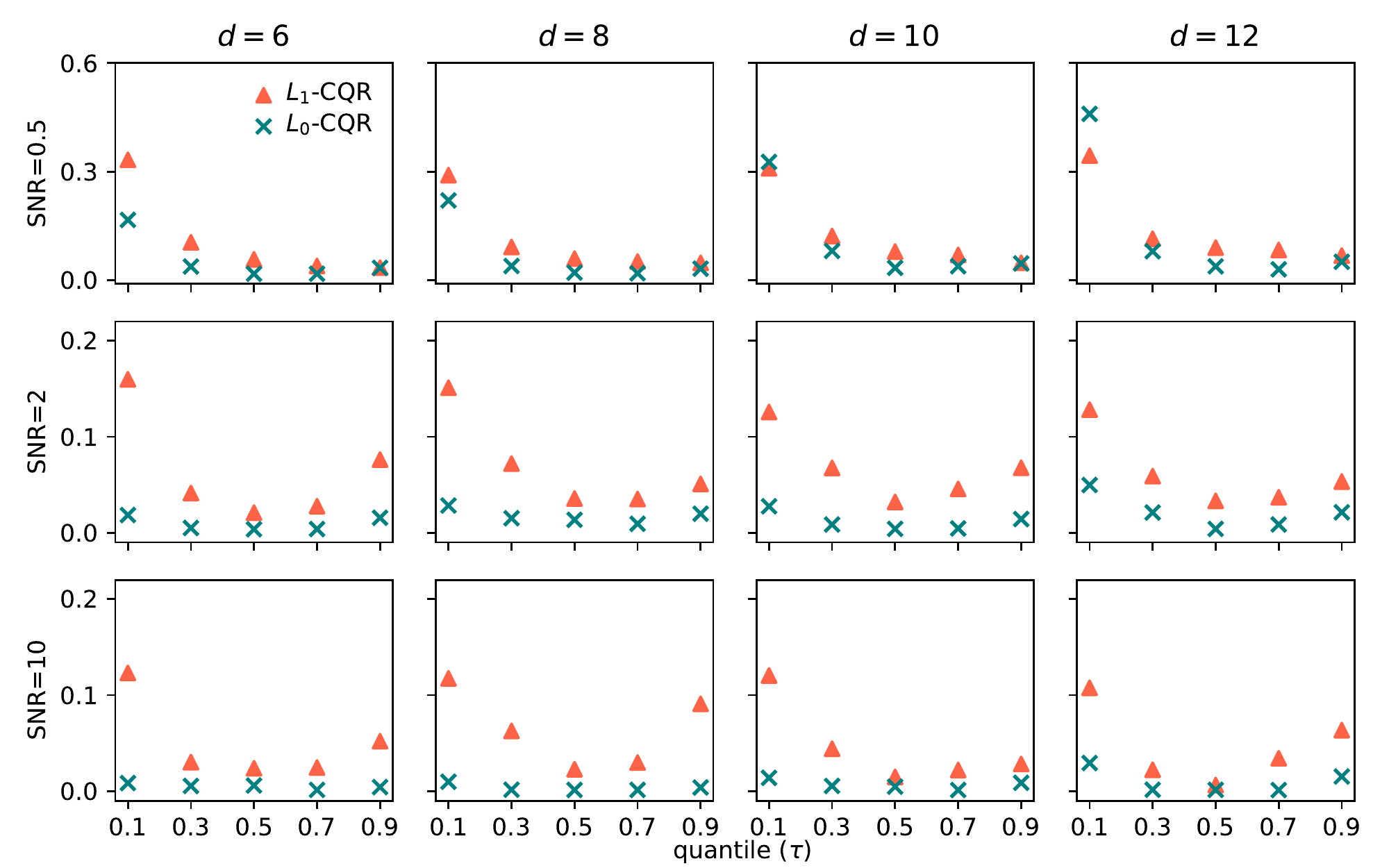}
    \caption{Prediction errors of the $\mathcal{L}_1$-CQR and $\mathcal{L}_0$-CQR approaches with $n = 100$ and $k_{\text{true}} = 2$.}
    \label{fig:fig1}
    \vspace{-1em}
\end{figure}

\begin{figure}[!htbp]
    \vspace{-1em}
    \centering
    \includegraphics[width=1\textwidth]{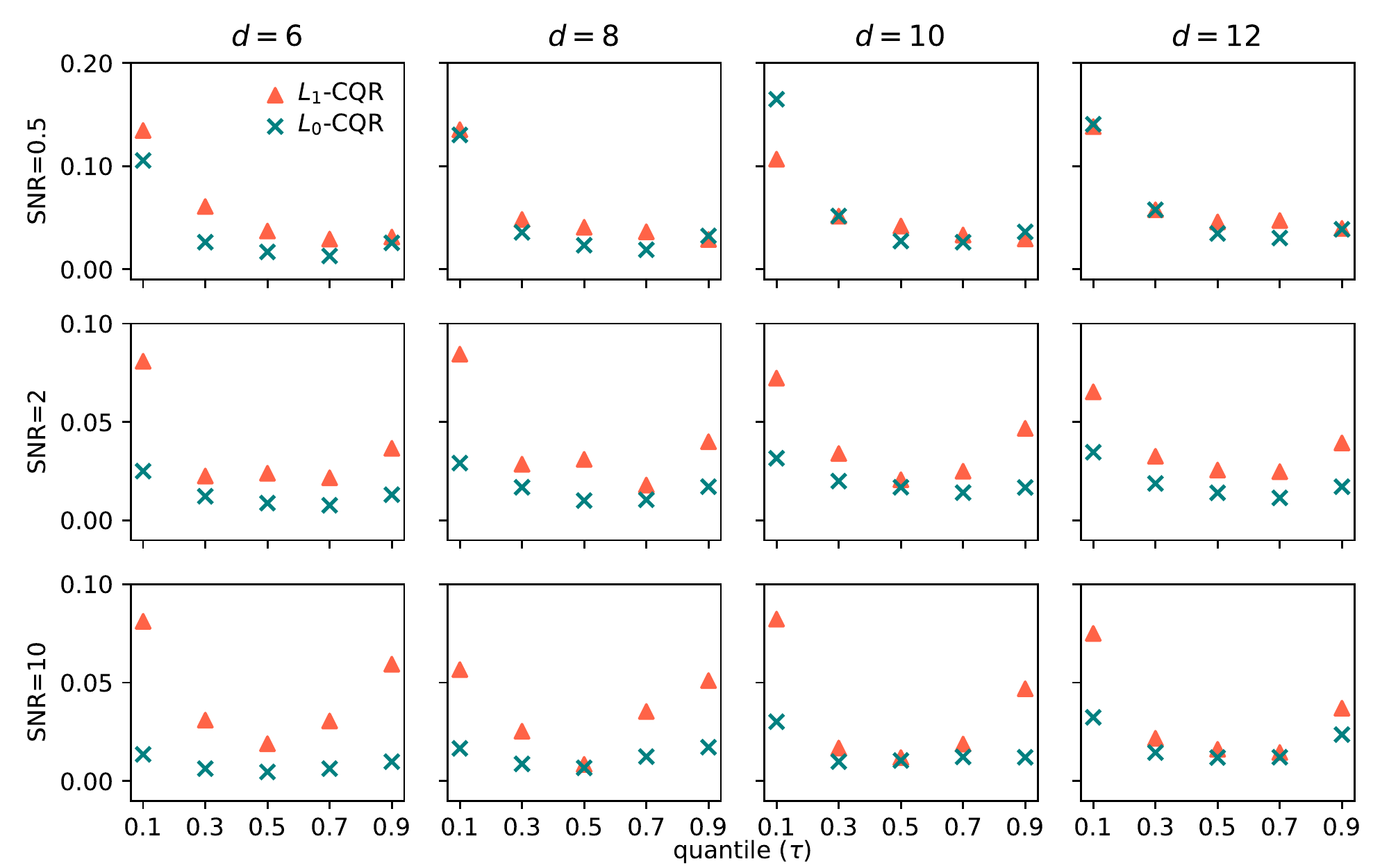}
    \caption{Prediction errors of the $\mathcal{L}_1$-CQR and $\mathcal{L}_0$-CQR approaches with $n = 100$ and $k_{\text{true}} = 4$.}
    \label{fig:fig2}
    \vspace{-1em}
\end{figure}

\begin{itemize}
\item  The $\mathcal{L}_0$-CQR approach generally outperformed the $\mathcal{L}_1$-CQR approach based on the comparison of the prediction errors, especially in a high-SNR regime. However, the $\mathcal{L}_1$-CQR performed better in certain cases in a low-SNR regime, which confirms the conclusion by \citeasnoun{Hastie2020}.
\item The quantile did not influence the prediction error systematically. As shown in Figs.~\ref{fig:fig1}, \ref{fig:fig2}, \ref{fig:figb1}, and \ref{fig:figb2}, there was a ``U-shaped" relationship between the prediction error and the quantile for both the $\mathcal{L}_1$- and $\mathcal{L}_0$-CQR approaches. The lowest bias was observed in the median estimation; the higher bias was observed in the lower-quantile or the higher-quantile estimation.
\item The $\mathcal{L}_0$-CQR and $\mathcal{L}_1$-CQR approaches tended to yield larger prediction errors as the number of dimensions increased, especially for a large true support set. \citeasnoun{Lee2020} also report similar conclusions for the Lasso-SCNLS in terms of the MSE change; that is, the MSE increases as more inputs are included. This is because the high-dimensional data space usually has a large sparsity, which undermines the prediction accuracy of both approaches.
\item The noise term obviously affected the precision of the estimated quantile function. As the SNR increased, the prediction error became smaller for both the $\mathcal{L}_1$-CQR and $\mathcal{L}_0$-CQR approaches. Furthermore, when the size of the true support set increased, the prediction error also decreased. 
\item The sample size effect was also observed in the experiments. Compared to estimating a larger sample size (Figs.~\ref{fig:figb1} and \ref{fig:figb2}), the results of Figs.~\ref{fig:fig1} and \ref{fig:fig2} suggest that a large sample would decrease the prediction error in both approaches. A similar conclusion is commonly seen in most of the existing literature (see, e.g., \citename{Lee2020}, \citeyear*{Lee2020}; \citename{Chen2021}, \citeyear*{Chen2021}). 
\end{itemize}

Because of the nonuniqueness of estimated objective function $\hat{Q}$ in the penalized quantile approaches, it is unfair to compare the performances of the penalized quantile and expectile approaches in estimating the quantile production function. In fact, the prediction errors (which are available upon request) show that neither the quantile approaches nor the expectile approaches clearly outperform the others. Within this context of subset selection, the core competitiveness of the methods relies on whether they can accurately eliminate the irrelevant variables. The accuracy metric would thus be a better measurement approach.

\subsection{Accuracy}\label{sec:acc}

To provide direct evidence of the performance of the $\mathcal{L}_1$- and $\mathcal{L}_0$-norm regularization approaches in subset variable selection, we next evaluated the accuracy statistic by counting the number of nonzero estimated $ \bbeta_i $, which is compared with the true support set $\omega^*$.  Recall that the higher the accuracy is, the better the performance.

Figs.~\ref{fig:fig3} and \ref{fig:figb3} plot the accuracy results with $n=500$ and $n=100$, respectively, but yield somewhat similar results. In most cases, the $\mathcal{L}_0$-CQR approach outperformed the $\mathcal{L}_1$-CQR approach. As expected, it was observed that the accuracy decreased in all scenarios with increasing sample size and that $\mathcal{L}_0$-CQR achieved the highest accuracy. In certain cases (e.g., low quantile with $\tau =0.1$), the accuracy of the $\mathcal{L}_1$-CQR approach was zero, indicating that the $\mathcal{L}_1$-CQR approach was unable to completely eliminate the irrelevant variables compared with the true support set. However, the $\mathcal{L}_1$-CQR approach can successfully select the true variables, at least partially. Therefore, for accuracy comparison, the $L_0$-CQR approach is a better choice for subset selection.
\begin{figure}[!htbp]
    \vspace{-1em}
    \centering
    \includegraphics[width=1\textwidth]{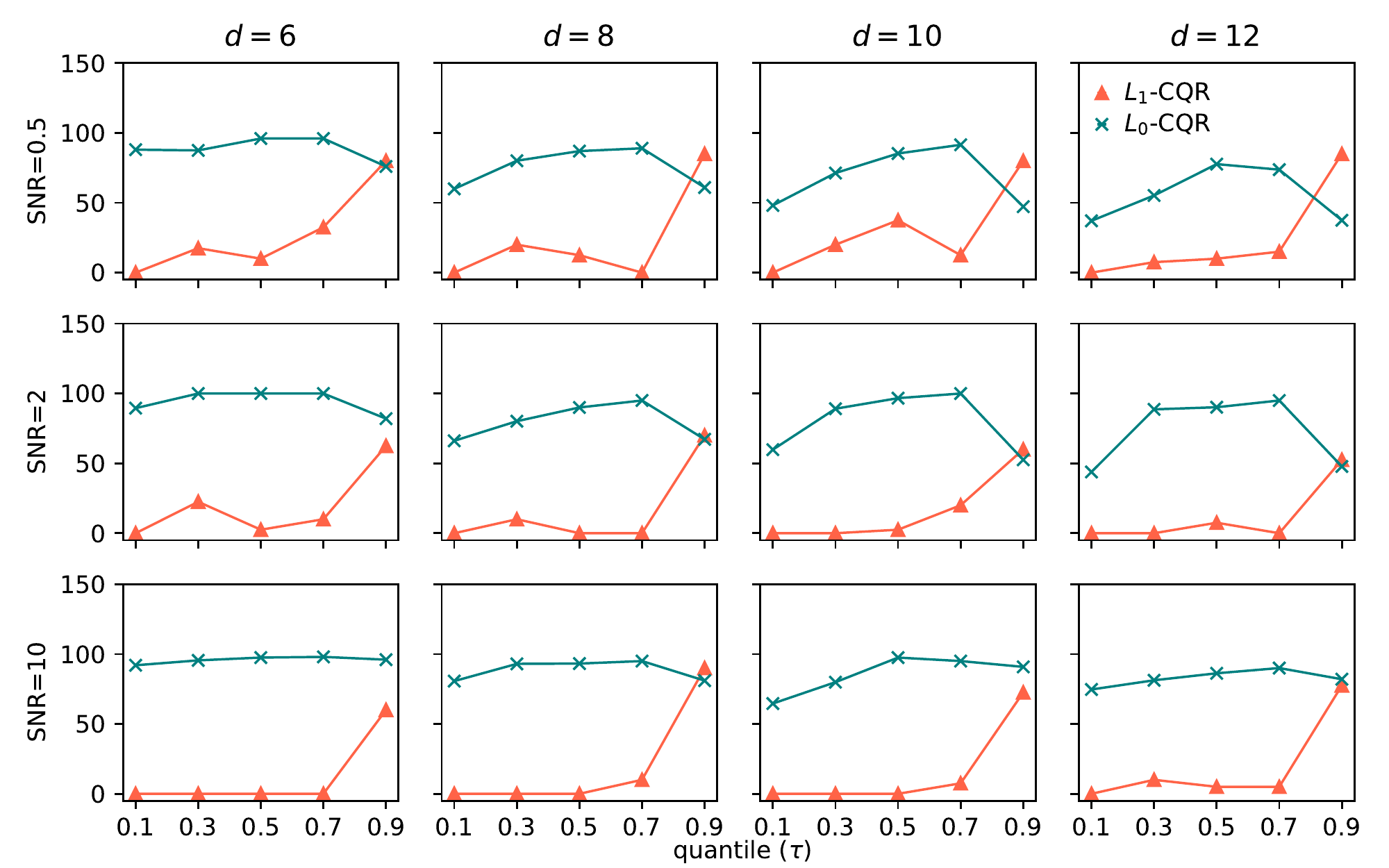}
    \caption{Accuracy of the $\mathcal{L}_1$-CQR and $\mathcal{L}_0$-CQR approaches with $n = 500$ and $k = 4$.}
    \label{fig:fig3}
    \vspace{-1em}
\end{figure}

Furthermore, similar to the prediction error comparison, several other main findings based on the accuracy benchmarking are summarized as follows:
\begin{itemize}
\item Fig.~\ref{fig:fig3} also shows no clear deterministic trend with decreasing or increasing quantile. For the $\mathcal{L}_1$-CQR approach, the lower-quantile estimation yielded poor accuracy and even could be zero. By contrast, the $\mathcal{L}_0$-CQR approach achieved a higher accuracy at the median quantile. Furthermore, the accuracy of $\mathcal{L}_0$-CQR was more than 50\% in almost all scenarios and close to 100\% in large parts. Note that the accuracy difference in each quantile estimation provides direct evidence of the superiority of the $\mathcal{L}_0$-CQR approach, especially in a high-SNR regime.
\item As more input variables were included, using both methods decreased the accuracy in terms of subset variable selection. This is due to the fact that the higher dimension makes the problem more sparse, inducing difficulty in true variable identification and selection. Although the $\mathcal{L}_1$-CQR approach achieved lower accuracy when $d=12$ with the lower SNR, $\mathcal{L}_1$-CQR generally performed better in most scenarios.
\item The noise term also affected the accuracy of subset variable selection. The larger noise amplitude in the data resulted in unstable accuracy estimation and normally yielded lower accuracy results. Furthermore, considering the results demonstrated in Figs.~\ref{fig:figb3}-\ref{fig:figb5}, the smaller size of the true support set made the subset selection easier for both approaches, as reflected by the low accuracy.
\end{itemize}

We further investigated the accuracy of the penalized quantile and expectile approaches for different input dimensions, true support set sizes, and SNR levels. Table~\ref{tab:tab1} demonstrates the simulation results with sample size $n=100$ and quantile $\tau=0.9$. As shown in Table~\ref{tab:tab1}, the expectile estimation approach generally outperformed the quantile approach. That is, the expectile approach can select the subset variable more effectively. This is because the expectile approach has a stronger convex objective function. However, the expectile approach performed worse compared to the quantile approach in certain cases, possibly due to the well-known small sample bias. Since the performances of the $\mathcal{L}_1$- and $\mathcal{L}_0$-norm regularization approaches show a nonsystematic trend as the quantile increases or decreases, the comparison for other quantiles also presents similar results to those in Table~\ref{tab:tab1}.
\begin{table}[!htbp]
\caption{Comparison of the accuracy of quantile and expectile estimation for $\mathcal{L}_0$-norm regularization.}
\label{tab:tab1}
\renewcommand\arraystretch{1.15}
\begin{center}
    \vspace{-1em}
    \begin{tabular}{lrrrrrrr}
    \toprule
    \multicolumn{1}{c}{\multirow{2}[4]{*}{$d$}} & \multicolumn{1}{c}{\multirow{2}[4]{*}{$k$}} & \multicolumn{2}{c}{$\text{SNR}=0.5$} & \multicolumn{2}{c}{$\text{SNR}=2$} & \multicolumn{2}{c}{$\text{SNR}=10$} \\
    \cmidrule{3-8}          &   & \multicolumn{1}{c}{quantile} & \multicolumn{1}{c}{expectile} & \multicolumn{1}{c}{quantile} & \multicolumn{1}{c}{expectile} & \multicolumn{1}{c}{quantile} & \multicolumn{1}{c}{expectile} \\
    \midrule
    6     & 2     & 42    & 42    & 43    & 52    & 56    & 63 \\
          & 4     & 76    & 78    & 74    & 73    & 69    & 79 \\
    8     & 2     & 39    & 34    & 39    & 39    & 46    & 39 \\
          & 4     & 54    & 51    & 53    & 61    & 47    & 63 \\
    10    & 2     & 22    & 24    & 45    & 34    & 47    & 48 \\
          & 4     & 44    & 47    & 38    & 49    & 65    & 56 \\
    12    & 2     & 24    & 21    & 39    & 25    & 37    & 40 \\
          & 4     & 29    & 47    & 31    & 46    & 44    & 46 \\
    \bottomrule
    \end{tabular}%
\end{center}
\vspace{-2em}
\end{table}%

In conclusion, the experiments seem informative enough to indicate that $\mathcal{L}_0$-CQR can outperform $\mathcal{L}_1$-CQR in terms of prediction error and accuracy. This is particularly true when the $\mathcal{L}_1$-  and $\mathcal{L}_0$-CQR associated with a large sample size or higher-noise data are estimated. The experiments also reveal that the expectile approaches generally perform better than the quantile approaches. Thus, we apply the expectile-based regularization approaches ($\mathcal{L}_1$-  and $\mathcal{L}_0$-CER) to select the subset variables and analyze the inequality in sustainable development in OECD countries.

%%%%%%%%%%%%%%%%%%%%%%%%%%%%%%%%%%%%%%%%%%%%%%%%
%%
%%
%%
%%%%%%%%%%%%%%%%%%%%%%%%%%%%%%%%%%%%%%%%%%%%%%%%

\section{Application to SDG evaluation}\label{sec:sdg}
\setcounter{equation}{0}

In this section, we explore the effectiveness of the proposed $\mathcal{L}_1$- and $\mathcal{L}_0$-norm regularization approaches in reducing dimensionality empirically and evaluated the performance of sustainability in 35 OECD countries using the SDG indicators. 

\subsection{Data and variables}

To benchmark the degree of sustainable development for OECD countries, we estimated the quantile production function based on the panel data of 35 OECD countries during 2017, 2019, and 2020.\footnote{
Colombia and Lithuania are excluded from the list of OECD countries due to insufficient data. The data for 2018 are also excluded due to the inconsistent indicators from the same database.} 
We collected the raw data from the UN SDG Index and Dashboard Reports (\citename{Sachs2017}, \citeyear*{Sachs2017}, \citeyear*{Sachs2019}, \citeyear*{Sachs2020}), which provide a substantial range of indicators and detailed descriptions of the indicator definition, indicator measurement, and indicator source. The same SDG data have already been applied in other empirical applications of sustainability evaluation (see, e.g., \citename{Lamichhane2020}, \citeyear*{Lamichhane2020}; \citename{Huan2021}, \citeyear*{Huan2021}).

The input-output variables in this paper were selected based on three pillars of sustainability: social, economic, and environmental. In the framework of sustainable development, the 17 SDGs can be divided into these three aspects but with different classifications (\citename{Dalampira2020}, \citeyear*{Dalampira2020}). For example, environmental sustainability can be achieved through Goals 6, 13, 14, and 15; economic sustainability through Goals 7, 8, 9, 11, and 12; and social sustainability through Goals 1, 2, 3, 4, 5, 10, 16, and 17 (\citename{Costanza2016}, \citeyear*{Costanza2016}). Using this criterion combined with the UN SDG Index and Dashboard Reports, we retrieved the following variables to characterize the production of sustainability (Table~\ref{tab:ind}).
\begin{table}[!htbp]
    \renewcommand\arraystretch{1.15}
    \caption{Description of the selected indicators for the sustainability evaluation.}
    \label{tab:ind}
    \centering
    \begin{threeparttable}
    \begin{tabular}{lll}
    \toprule
    \multicolumn{1}{l}{Pillar} & \multicolumn{1}{l}{Goal} & \multicolumn{1}{l}{Indicator} \\
    \midrule
                        &   \vspace{-1em}\\  
        Economic        &  SDG8     &  ($y$) Adjusted GDP growth (\%) \\
                        &   \vspace{-0.5em} \\  
        Social          &  SDG1     &  ($x_1$) Poverty headcount ratio at US\$ 1.90/day (\%) \\
                        &  SDG3     &  ($x_2$) Life expectancy at birth (years) \\
                        &           &  ($x_3$) Subjective well-being (average ladder score) \\
                        &  SDG8     &  ($x_4$) Employment-to-population ratio (\%) \\
                        &  SDG10    &  ($x_5$) Gini coefficient adjusted for top income (1-100) \\
                        &   \vspace{-0.5em} \\  
        Environmental     &  SDG11    &  ($x_6$) Annual mean concentration of PM2.5 ($\mu$g/m$^3$) \\
                        &  SDG12    &  ($x_7$) Municipal solid waste (kg/day/capita) \\
                        &           &  ($x_8$) E-waste generated (kg/capita) \\
                        &           &  ($x_9$) Production-based SO$_2$ emissions (kg/capita) \\
                        &           &  ($x_{10}$) Nitrogen production footprint (kg/capita) \\
                        &  SDG13    &  ($x_{11}$) Energy-related CO$_2$ emissions per capita (tCO$_2$/capita) \\
                        &           &  ($x_{12}$) Effective carbon rate (EUR/tCO$_2$)  \\
    \bottomrule
    \end{tabular}%
%\vspace{-1em}
\end{threeparttable}
\end{table}%

To integrate the sustainability evaluation (or SDG assessment) into the production function, the aggregated variable, GDP, from the economic dimension is treated as the output, and the other variables from the social and environmental dimensions are considered as the inputs. In the estimation, we used the real GDP collected from the World Bank (the same data source) instead of the adjusted GDP growth. Regarding the indicator employment-to-population ratio, we transformed it into the actual population according to its definition in the dashboard reports. Table~\ref{tab:stats} summarizes the descriptive statistics for the input and output variables. Note that one can, of course, introduce such sustainable indicators as the elderly poverty rate, inequality of income, and freshwater withdrawal described in \citeasnoun{Lamichhane2020}, \citeasnoun{Singpai}, and the references therein, but should mind the computational burden in the nonparametric model.

\subsection{Subset selection and estimates}

Since the expectile approaches generally outperform the quantile approaches in terms of the accuracy comparison and can ensure uniqueness of the estimated quantile functions, we applied the $\mathcal{L}_1$- and $\mathcal{L}_0$-CER approaches elaborated in Section \ref{sec:meth} to calculate the averaged estimates and examine the effectiveness in subset variable selection. To obtain a more complete picture of the distribution characteristics of sustainability, we considered 10 expectile functions with $\tilde{\tau} \in  \{0.05, 0.15, \cdots, 0.85, 0.95\}$ by solving the $\mathcal{L}_1$-CER and $\mathcal{L}_0$-CER problems. The corresponding quantile $\tau$ was reported to connect with the usual quantile estimation by counting the number of negative residuals $\hat{\varepsilon}_i^-$ that are greater than zero (\citename{Efron1991}, \citeyear*{Efron1991}).

For the SDG application, we also used the 5-fold cross-validation procedure to determine the optimal tuning parameters $\lambda$, $k$, and $M$. Specifically, we calibrated $\lambda$ over 100 values ranging from 0.1 to 3 for the $\mathcal{L}_1$-CER approach. For the $\mathcal{L}_0$-CER approach, the subset size $k$ and constant $M$ were chosen from a set of integers ranging from 1 to 11 and the set $\{0.1, 0.5, 0.8, 1, 1.5, 1.8, 2, 2.5, 3, 5\}$, respectively. Table~\ref{tab:esti} demonstrates the averaged estimates of each input variable on sustainability for the selected expectiles (i.e., 5\textsuperscript{th}, 35\textsuperscript{th}, 65\textsuperscript{th}, and 95\textsuperscript{th}) in $\mathcal{L}_1$-CER and $\mathcal{L}_0$-CER estimation. 

As shown in Table \ref{tab:esti}, the $\mathcal{L}_0$-norm regularization is a better choice to select the subset variables, which further confirms the conclusion of the MC study. The blanks in Table \ref{tab:esti} denote that the corresponding variable is eliminated by the $\mathcal{L}_1$-norm or $\mathcal{L}_0$-norm regularization. The number of eliminated irrelevant variables in the $\mathcal{L}_0$-CER estimation is larger than that in the $\mathcal{L}_1$-CER estimation for all expectiles. For example, in the 95\textsuperscript{th} expectile (i.e., 63\textsuperscript{rd} and 71\textsuperscript{st} quantiles for the $\mathcal{L}_0$ penalty and $\mathcal{L}_1$ penalty, respectively), $\mathcal{L}_0$-CER selects four principal input variables, whereas $\mathcal{L}_1$-CER estimates eight input variables. Note that the subset size of actual variables used in the $\mathcal{L}_0$-CER estimation is less than or equal to the optimal $\hat{k}$.
\begin{table}[!htbp]
 %\vspace{-1em}
  \renewcommand\arraystretch{1.15}
  \centering
  \caption{Average estimates of the subset variables for the selected expectiles.}
    \begin{footnotesize}
    \begin{tabular}{lrrrrrrrrrrr}
    \toprule
          & \multicolumn{1}{c}{$\mathcal{L}_0$-CER} & \multicolumn{1}{c}{$\mathcal{L}_1$-CER} &       & \multicolumn{1}{c}{$\mathcal{L}_0$-CER} & \multicolumn{1}{c}{$\mathcal{L}_1$-CER} &       & \multicolumn{1}{c}{$\mathcal{L}_0$-CER} & \multicolumn{1}{c}{$\mathcal{L}_1$-CER} &       & \multicolumn{1}{c}{$\mathcal{L}_0$-CER} & \multicolumn{1}{c}{$\mathcal{L}_1$-CER} \\
         \cmidrule{2-3}\cmidrule{5-6}\cmidrule{8-9}\cmidrule{11-12}    E ($\tilde{\tau}$) & \multicolumn{2}{c}{0.05} &       & \multicolumn{2}{c}{0.35} &       & \multicolumn{2}{c}{0.65} &       & \multicolumn{2}{c}{0.95} \\
        \cmidrule{2-3}\cmidrule{5-6}\cmidrule{8-9}\cmidrule{11-12}    Q ($\tau$) & 0.14  & 0.16  &       & 0.29  & 0.36  &       & 0.46  & 0.42  &       & 0.63  & 0.71 \\
    \midrule
    $\hat{\beta}_{1}$ &        &        &       &       &       &       &       &       &       &       &  \\
    $\hat{\beta}_{2}$ & 0.084  & 0.0006 &       & 0.104  & 0.0011 &       &       & 0.0007 &       & 0.046  & 0.0008 \\
    $\hat{\beta}_{3}$ & 0.519  &        &       & 0.370  &       &       &       &       &       &       &  \\
    $\hat{\beta}_{4}$ & 0.213  & 0.1039 &       & 0.251  & 0.1197 &       & 0.168  & 0.1203 &       & 0.181  & 0.1241 \\
    $\hat{\beta}_{5}$ &        & 0.0001 &       &       & 0.0010 &       &       & 0.0001 &       &       & 0.0002 \\
    $\hat{\beta}_{6}$ &        & 0.0003 &       &       & 0.0014 &       &       & 0.0007 &       &       & 0.0005 \\
    $\hat{\beta}_{7}$ &        &        &       & 0.766  &       &       & 0.417  &       &       &       &  \\
    $\hat{\beta}_{8}$ & 0.145  & 0.0008 &       & 0.000  & 0.0031 &       & 0.000  & 0.0009 &       & 0.079  & 0.0011 \\
    $\hat{\beta}_{9}$ &        & 0.0033 &       &       & 0.0136 &       &       & 0.0061 &       &       & 0.0086 \\
    $\hat{\beta}_{10}$ &       & 0.0122 &       &       & 0.0266 &       &       & 0.0184 &       &       & 0.0246 \\
    $\hat{\beta}_{11}$ &       &        &       &       & 0.0052 &       & 0.172  &       &       & 0.161  &  \\
    $\hat{\beta}_{12}$ &       & 0.0036 &       &       & 0.0054 &       &       & 0.0012 &       &       & 0.0010 \\
    $\hat{\alpha}$     & -10.249  & -0.2700 &       & -10.300  & -0.8202 &       & -1.242  & -0.2407 &   & -4.356  & -0.3166 \\
    $\hat{k}$          & 7     &  --   &      & 7     &  --    &       & 3     &  --     &       & 6     & -- \\
    $\hat{M}$          & 2.5   &  --   &      & 2.5   &  --    &       & 1.5   &  --     &       & 1.5   & -- \\
    $\hat{\lambda}$    & --    & 0.95  &      & --    & 0.33   &       & --    & 2.41    &       & --    & 0.31 \\
    \bottomrule
    \end{tabular}%
    \label{tab:esti}%
    %\vspace{-1em}
    \end{footnotesize}
\end{table}%

The best subset variables differ across different expectiles (quantiles). It can be seen that, for example, variable $x_2$, life expectancy, is eliminated at the 65\textsuperscript{th} expectile in $\mathcal{L}_0$-CER but not at the 5\textsuperscript{th}, 35\textsuperscript{th}, or 95\textsuperscript{th} expectiles. Note that the CER approach, a novel data-driven approach, estimates the production function locally and is robust to heterogeneity (\citename{Kuosmanen2020b}, \citeyear*{Kuosmanen2020b}; \citename{Dai2020}, \citeyear*{Dai2020}). The approaches proposed in this paper inherit the good properties from the CER. Utilizing the full information of each observation, the $\mathcal{L}_1$-/$\mathcal{L}_0$-CER approach estimates the expectile production function based on the actual level of sustainability, taking both inefficiency and noise into account, but not the full production frontier. The subset selection is thus not exactly the same in each expectile estimation. 

The average estimates of subset variables reveal the heterogeneous effects on the expected sustainability from the expectile estimation. Note that the estimated coefficients $\hat{\beta}$ differ across observations. In the $\mathcal{L}_0$-CER estimation, we can see that the impact of the employment population in the lower 5\textsuperscript{th} and 35\textsuperscript{th} expectile countries are higher than those in other expectile countries. From the policy-recommendation perspective, those countries located at the lower expectile can take further actions to promote employment and even sustainable development. The heterogeneous policies could be proposed according to the average estimates. 

Table \ref{tab:esti} shows the strong evidence of economies of scale (i.e., increasing returns to scale) for OECD countries. We find that the estimated intercepts $\hat{\alpha}$ are systematically less than zero for all expectiles, exhibiting increasing returns to scale. In terms of each expectile, all countries are operating at increasing returns to scale in the lower expectiles (e.g., 5\textsuperscript{th} and 15\textsuperscript{th} expectiles), while few countries are operating at decreasing returns to scale in the higher expectiles (e.g., only three units with positive $\hat{\alpha}$ in the 95\textsuperscript{th} expectile). In the context of sustainability production, increasing returns suggest that the OECD countries could improve sustainability through further expansion. The OECD countries, especially for the smaller counties, for example, could increase inputs in the social aspect (e.g., increasing the employment population) and decrease pollutant emissions in the environmental dimension to accelerate sustainable development. 

\subsection{Performance benchmarking}

While the average estimates of subset variables are very interesting and relevant as such, another main purpose of this paper is to benchmark the performance of sustainable development for OECD countries. To this end, Table \ref{tab:perf} reports the productive performance of OECD countries relative to the quantile frontier based on the $\mathcal{L}_0$-CER estimation. For the sake of illustration, following \citeasnoun{Kuosmanen2020}, Table \ref{tab:perf} presents the full list of OECD countries under four groups: EU-15, EU transition economies, European Free Trade Association (EFTA), and Non-European OECD. Note that the higher the expectile is, the better the relative performance.

Recall that the $\mathcal{L}_0$-CER approach estimates the quantile (expectile) production function instead of the full frontier. This would suggest that the quantile production function estimation can serve as the benchmarking for structure analysis. There is thus a good connection between the quantile benchmarking and metafrontier benchmarking. One could build a quantile metafrontier to analyze a group of units at the same quantile level (see \citename{Lai2018}, \citeyear*{Lai2018}). In the present context, the quantile output technical efficiency can be measured by $\text{TE}(\tau)= y /\hat{Q}(\tau \, | \, \bx)$. However, for the purpose of performance benchmarking, one can directly compare the relative locations of the quantiles (e.g., \citename{Kuosmanen2020}, \citeyear*{Kuosmanen2020}; this paper) instead of the quantile-based technical efficiency (e.g., \citename{Behr2010}, \citeyear*{Behr2010}; \citename{Lai2018}, \citeyear*{Lai2018}). 
\begin{table}[!htbp] 
    \renewcommand\arraystretch{1.15}
    %\vspace{-2em}
    \caption{Productive performances of countries relative to the quantile frontiers.}
    \label{tab:perf}%
    \centering
    \begin{threeparttable}
    \begin{tabular}{lcccrccc}
    \toprule
          & 2017  & 2019  & 2020  &       & 2017  & 2019  & 2020 \\
    \midrule
    \multicolumn{4}{l}{\textbf{EU-15}} & \multicolumn{4}{l}{\textbf{EU transition economies}} \\
    Luxembourg & 10    & 10    & 10    & \multicolumn{1}{l}{Estonia} & 10    & (6, 7) & 10 \\
    United Kingdom & 1     & 1     & (1, 2) & \multicolumn{1}{l}{Slovenia} & 10    & 10    & 10 \\
    Denmark & 10    & (9, 10) & 10    & \multicolumn{1}{l}{Poland} & (5, 6) & (6, 7) & (2, 3) \\
    Sweden & (9, 10) & (9, 10) & (9, 10) & \multicolumn{1}{l}{Hungary} & (6, 7) & (1, 2) & (5, 6) \\
    Germany & (2, 3) & (1, 2) & 1     & \multicolumn{1}{l}{Latvia} & 10    & 10    & 10 \\
    Finland & (8, 9 & (8, 9) & (9, 10) & \multicolumn{1}{l}{Czech Republic} & (9, 10) & (9, 10) & (8, 9) \\
    Belgium & 10    & (9, 10) & (9, 10) & \multicolumn{1}{l}{Slovakia} & (9, 10) & (9, 10) & (7, 8) \\
    Netherlands & (6, 7) & (6, 7) & (6, 7) &       &       &       &  \\
    France & (3, 4) & (4, 5) & (7, 8) & \multicolumn{4}{l}{\textbf{Non-European OECD}} \\
    Italy & (4, 5) & (4, 5) & (3, 4) & \multicolumn{1}{l}{Canada} & 1     & 1     & \multicolumn{1}{l}{(1, 2)} \\
    Austria & (7, 8) & (8, 9) & (7, 8) & \multicolumn{1}{l}{Israel} & (6, 7) & (6, 7) & (6, 7) \\
    Greece & (6, 7) & (7, 8) & (7, 8) & \multicolumn{1}{l}{United States} & 10    & 10    & 10 \\
    Ireland & 10    & 10    & (5, 6) & \multicolumn{1}{l}{Australia} & (1, 2) & (1, 2) & (2, 3) \\
    Spain & (1, 2) & (1, 2) & (2, 3) & \multicolumn{1}{l}{New Zealand} & (1, 2) & (1, 2) & (6, 7) \\
    Portugal & (5, 6) & (4, 5) & (2, 3) & \multicolumn{1}{l}{South Korea} & (4, 5) & (5, 6) & (6, 7) \\
          &       &       &       & \multicolumn{1}{l}{Chile} & (6, 7) & (6, 7) & (6, 7) \\
    \multicolumn{4}{l}{\textbf{EFTA}} & \multicolumn{1}{l}{Japan} & 1     & 1     & (5, 6) \\
    \multicolumn{1}{p{5.91em}}{Norway} & (9, 10) & (9, 10) & (9, 10) & \multicolumn{1}{l}{Mexico} & (1, 2) & (4, 5) & 1 \\
    Iceland & 10    & 10    & 10    & \multicolumn{1}{l}{Turkey} & (7, 8) & (8, 9) & (6, 7) \\
    Switzerland & 10    & 10    & 10    &       &       &       &  \\
    \bottomrule
    \end{tabular}%
        \begin{tablenotes}
        \setlength\labelsep{0pt}
        \footnotesize
        \item Note: 1) Legend: 10 = above 95\textsuperscript{th} quantile; (9,10) = between 85\textsuperscript{th} and 95\textsuperscript{th} quantile; 
                    (8,9) = between 75\textsuperscript{th} and 85\textsuperscript{th} quantile; (7,8) = between 65\textsuperscript{th} and 75\textsuperscript{th} quantile;
                    (6,7) = between 55\textsuperscript{th} and 65\textsuperscript{th} quantile; (5,6) = between 45\textsuperscript{th} and 55\textsuperscript{th} quantile; (4,5) = between 35\textsuperscript{th} and 45\textsuperscript{th} quantile; (3,4) = between 25\textsuperscript{th} and 35\textsuperscript{th} quantile; (2,3) = between 15\textsuperscript{th} and 25\textsuperscript{th} quantile; (1,2) = between 5\textsuperscript{th} and 15\textsuperscript{th} quantile; 1 = below 5\textsuperscript{th} quantile. 2) Although the United Kingdom left the EU in January 2020, we still treat it as an EU country owing to the sample period.
        \end{tablenotes}
    \end{threeparttable}
    %\vspace{-2em}
\end{table}%

Several interesting findings can be drawn from Table \ref{tab:perf}. First, in general, EFTA countries performed best in sustainable development, while the ranking of Non-European countries was relatively lower than that of the other groups. With respect to time, there was an increase in sustainability (i.e., the level of performance) for most countries and a decrease for a few other countries. Although Table \ref{tab:perf} provides a similar ranking to the overall performance from the SDG Dashboard Reports (\citename{Sachs2017}, \citeyear*{Sachs2017}, \citeyear*{Sachs2019}, \citeyear*{Sachs2020}) and \citeasnoun{Lamichhane2020}, there was a minor difference in ranking for certain countries (c.f., the ranking of Germany) due to the fact that all 17 SDGs with 231 unique indicators were evaluated in their studies. Instead, we consider a large portion of environmental indicators as input variables (see Table \ref{tab:ind}).

Second, countries from the EU transition economies can serve as the benchmark. As shown in Table \ref{tab:perf}, the developed countries did not necessarily perform best in sustainable development, whereas the less-developed countries had a chance to become better (e.g., Estonia and Slovenia). This partly demonstrates the advantages of the CQR approach over the full frontier estimation in structure analysis. If a unit is located in the interior of a production probability set, it will be treated as inefficient in the traditional frontier, in contrast to quantile-based efficiency analysis, where we treat it as an efficient unit. Therefore, less-developed countries, such as the Czech Republic and Slovenia, could perform best in sustainable development. As stated earlier, this might be related to the selection of input-output variables.

Third, Table \ref{tab:perf} reveals the inequality of the relative performances of sustainability among the OECD countries. The large variation in expectiles would suggest that the plan, action, and implementation of facilitating sustainable development in OECD counties are inefficient. The inequality of sustainability has become a barrier to the 2030 Agenda for Sustainable Development for OECD countries, even to all countries worldwide. Consequently, for the upper-expectile countries, they can achieve the 17 SDGs easily, but it is difficult for lower-expectile countries to achieve those goals. Table \ref{tab:perf} also suggests that there is large room for OECD countries to improve in terms of sustainability. This considerable room for improvement calls for ambitious plans and actions to achieve the 17 SDGs. A whole plan, for example, in terms of SO$_2$ emissions, similar to the Paris Agreement, could be considered and implemented in OECD countries. 

%%%%%%%%%%%%%%%%%%%%%%%%%%%%%%%%%%%%%%%%%%%%%%%%
%%
%%
%%
%%%%%%%%%%%%%%%%%%%%%%%%%%%%%%%%%%%%%%%%%%%%%%%%

\section{Conclusions}\label{sec:conc}

Addressing the curse of dimensionality under a limited sample size and sparse data space remains a challenge in production efficiency analysis. In this paper, we developed a new $\mathcal{L}_0$-norm regularization approach to CQR and CER for subset variable selection. The paper investigated the finite sample performance of the proposed $\mathcal{L}_0$-norm regularization in contrast to the commonly used $\mathcal{L}_1$-norm regularization via an MC study (Section \ref{sec:mc}). The proposed $\mathcal{L}_0$-norm regularization approach was further applied to the SDG data of OECD countries for the years 2017, 2019 and 2020 to evaluate the performance of sustainability and compare the methods' performances empirically (Section \ref{sec:sdg}). The proposed $\mathcal{L}_0$-norm regularization approach can more effectively address the curse of dimensionality via subset variable selection in multidimensional spaces.

The evidence from the MC study suggests that $\mathcal{L}_0$-CQR has a major advantage over $\mathcal{L}_1$-CQR in subset variable selection, especially in the high-SNR regime. $\mathcal{L}_1$-CQR is unable to completely eliminate the irrelevant variables compared with the true support set in certain cases, whereas $\mathcal{L}_0$-CQR can successfully select the true relevant variables at least partially. There is no systematic increasing or decreasing trend as the quantile varies. The effects of the SNR and sample size are also observed: the higher the SNR level is, the lower the prediction error and accuracy; the larger the sample size is, the lower the prediction error and accuracy. Furthermore, we find that the expectile approach generally performs better than the quantile approach. 

The SDG application results demonstrate that the $\mathcal{L}_0$-CER approach can eliminate more irrelevant variables than the $\mathcal{L}_1$-CER approach, indicating a better performance in subset selection. The relative expectile ranking also helps reveal the heterogeneity of performance in sustainable development. The results also suggest that the countries in the EFTA group outperform those in other groups, that the countries in the EU transition economies can serve as the benchmark, and that there is large inequality of the relative performance of sustainability among the OECD countries. 

The findings drawn from this study can provide insight into the relationship between $\mathcal{L}_1$-norm regularization and $\mathcal{L}_0$-norm regularization in the context of production economics, especially in the era of big data. Furthermore, the proposed approach is applicable to other relative performance evaluations (e.g., schools, hospitals, and electricity distribution system operators). Finding a more efficient procedure to determine the tuning parameter is one of the fascinating avenues for future research. The other, as noted in Section \ref{sec:l1cqr}, is to provide a formal proof that application of the $\mathcal{L}_1$-norm squared regularization to the CQR problem can ensure the uniqueness of the subgradients.

%%%%%%%%%%%%%%%%%%%%%%%%%%%%%%%%%%%%%%%%%%%%%%%%
%%
%%
%%
%%%%%%%%%%%%%%%%%%%%%%%%%%%%%%%%%%%%%%%%%%%%%%%%

\section*{Acknowledgments}\label{sec:ack}
The author is indebted to Timo Kuosmanen for his valuable guidance, advice, and comments. I also gratefully acknowledge the computational resources provided by the Aalto Science-IT project and financial support from the Foundation for Economic Education (Liikesivistysrahasto) (nos. 180019, 190073) and the HSE Support Foundation (no. 11--2290).

%%%%%%%%%%%%%%%%%%%%%%%%%%%%%%%%%%%%%%%%%%%%%%%%
%%
%%
%%
%%%%%%%%%%%%%%%%%%%%%%%%%%%%%%%%%%%%%%%%%%%%%%%%
\baselineskip 12pt
\bibliography{references} % your .bib file here

@article{Mazumder2019,
    title = {{A computational framework for multivariate convex regression and its variants}},
    year = {2019},
    journal = {Journal of the American Statistical Association},
    author = {Mazumder, Rahul and Choudhury, Arkopal and Iyengar, Garud and Sen, Bodhisattva},
    month = {1},
    pages = {318--331},
    volume = {114},
    publisher = {American Statistical Association},
    issn = {1537274X},
    arxivId = {1509.08165},
    keywords = {Augmented Lagrangian method, Lipschitz convex regression, Nonparametric least squares estimator, Scalable quadratic programming, Smooth convex regression}
}

@article{Huan2021,
    title = {{A method for assessing the impacts of an international agreement on regional progress towards Sustainable Development Goals}},
    year = {2021},
    journal = {Science of the Total Environment},
    author = {Huan, Yizhong and Yu, Yurong and Liang, Tao and Burgman, Mark},
    pages = {147336},
    volume = {785},
    keywords = {Development, Difference-in-differences, Goals, Impact assessment, International agreement, Policy evaluation, Sustainable}
}

@article{Lee2013,
    title = {{A more efficient algorithm for Convex Nonparametric Least Squares}},
    year = {2013},
    journal = {European Journal of Operational Research},
    author = {Lee, Chia Yen and Johnson, Andrew L. and Moreno-Centeno, Erick and Kuosmanen, Timo},
    pages = {391--400},
    volume = {227},
    isbn = {0377-2217},
    issn = {03772217},
    keywords = {Computational complexity, Convex Nonparametric Least Squares, Frontier estimation, Model reduction, Productive efficiency analysis}
}

@article{Sinha2015,
    title = {{A multiobjective exploratory procedure for regression model selection}},
    year = {2015},
    journal = {Journal of Computational and Graphical Statistics},
    author = {Sinha, Ankur and Malo, Pekka and Kuosmanen, Timo},
    number = {1},
    month = {1},
    pages = {154--182},
    volume = {24},
    publisher = {American Statistical Association},
    issn = {15372715},
    keywords = {Automated regression, Evolutionary multiobjective optimization, Genetic algorithm, Multiple criteria decision making}
}

@article{Pastor2002,
    title = {{A statistical test for nested radial DEA models}},
    year = {2002},
    journal = {Operations Research},
    author = {Pastor, Jesús T. and Ruiz, José L. and Sirvent, Inmaculada},
    pages = {728--735},
    volume = {50},
    publisher = {INFORMS Inst.for Operations Res.and the Management Sciences},
    issn = {0030364X}
}

@article{Dula2011,
    title = {{An algorithm for data envelopment analysis}},
    year = {2011},
    journal = {INFORMS Journal on Computing},
    author = {Dul{\'{a}}, J. H.},
    month = {3},
    pages = {284--296},
    volume = {23},
    issn = {10919856},
    keywords = {Computational geometry, Convex analysis, Data envelopment analysis (DEA), Linear programming}
}

@unpublished{Lin2020,
    title = {{An augmented Lagrangian method with constraint generations for shape-constrained convex regression problems}},
    year = {2020},
    author = {Lin, Meixia and Sun, Defeng and Toh, Kim-Chuan},
    month = {12},
    url = {http://arxiv.org/abs/2012.04862},
    arxivId = {2012.04862}
}

@article{Lamichhane2020,
    title = {{Benchmarking OECD countries’ sustainable development performance: A goal-specific principal component analysis approach}},
    year = {2020},
    journal = {Journal of Cleaner Production},
    author = {Lamichhane, Shyam and E{\u{g}}ilmez, Gökhan and Gedik, Ridvan and Bhutta, M. Khurrum S. and Erenay, Bulent},
    month = {3},
    pages = {125040},
    volume = {287},
    publisher = {Elsevier Ltd},
    issn = {09596526},
    keywords = {Multicollinearity, OECD, Principal component analysis, Sustainability index, Sustainable development, Sustainable development goals}
}

@article{Bertsimas2016,
    title = {{Best subset selection via a modern optimization lens}},
    year = {2016},
    journal = {Annals of Statistics},
    author = {Bertsimas, Dimitris and King, Angela and Mazumder, Rahul},
    month = {4},
    pages = {813--852},
    volume = {44},
    publisher = {Institute of Mathematical Statistics},
    issn = {00905364},
    keywords = {Algorithms, Best subset selection, Discrete optimization, Global optimization, Lasso, Least absolute deviation, Mixed integer programming, Sparse linear regression, ℓ0-constrained minimization}
}

@article{Hastie2020,
    title = {{Best subset, forward stepwise or Lasso? Analysis and recommendations based on extensive comparisons}},
    year = {2020},
    journal = {Statistical Science},
    author = {Hastie, Trevor and Tibshirani, Robert and Tibshirani, Ryan},
    month = {11},
    pages = {579--592},
    volume = {35},
    publisher = {Institute of Mathematical Statistics},
    issn = {0883-4237}
}

@article{Chen2021a,
    title = {{Binary classification with covariate selection through L0-penalised empirical risk minimisation}},
    year = {2021},
    journal = {Econometrics Journal},
    author = {Chen, Le-Yu and Lee, Sokbae},
    pages = {103--120},
    volume = {24},
    keywords = {C53, C55, Classification, covariate selection, finite-sample property, maximum score estima-tion, mixed-integer optimisation, penalised estimation JEL codes: C52}
}

@article{Lavergne2008,
    title = {{Breaking the curse of dimensionality in nonparametric testing}},
    year = {2008},
    journal = {Journal of Econometrics},
    author = {Lavergne, Pascal and Patilea, Valentin},
    month = {3},
    pages = {103--122},
    volume = {143},
    issn = {03044076},
    keywords = {Curse of dimensionality, Nonparametric methods, Testing}
}

@article{Wilson2018,
    title = {{Dimension reduction in nonparametric models of production}},
    year = {2018},
    journal = {European Journal of Operational Research},
    author = {Wilson, Paul W.},
    month = {5},
    pages = {349--367},
    volume = {267},
    publisher = {Elsevier B.V.},
    issn = {03772217},
    keywords = {DEA, Dimension reduction, Dimensionality, Efficiency, FDH}
}

@article{Lai2018,
    title = {{Estimation of the production profile and metafrontier technology gap: A quantile approach}},
    year = {2018},
    journal = {Empirical Economics},
    author = {Lai, Hung pin and Huang, Cliff J. and Fu, Tsu Tan},
    month = {11},
    pages = {2709--2731},
    volume = {58},
    publisher = {Springer Verlag},
    issn = {03777332},
    keywords = {Metafrontier, Production efficiency, Production function, Quantile regression, Technology gap}
}

@article{Su2017,
    title = {{False discoveries occur early on the lasso path}},
    year = {2017},
    journal = {Annals of Statistics},
    author = {Su, Weijie and Bogdan, Malgorzata and Cand{\`{e}}s, Emmanuel},
    pages = {2133--2150},
    volume = {45},
    isbn = {202112:33:08},
    issn = {00905364},
    keywords = {Adaptive selection of parameters, Approximate message passing (AMP), False discovery rate, False negative rate, Lasso, Lasso path, Power}
}

@article{Benitez-Pena2019,
    title = {{Feature selection in data envelopment analysis: A mathematical optimization approach}},
    year = {2020},
    journal = {Omega},
    author = {Ben{\'{i}}tez-Pe{\~{n}}a, Sandra and Bogetoft, Peter and Romero Morales, Dolores},
    pages = {102068},
    volume = {96},
    publisher = {Elsevier Ltd},
    issn = {03050483},
    keywords = {Benchmarking, Data Envelopment Analysis, Feature Selection, Mixed Integer Linear Programming}
}

@article{Dai2020,
    title = {{Forward-looking assessment of the GHG abatement cost: Application to China}},
    year = {2020},
    journal = {Energy Economics},
    author = {Dai, Sheng and Zhou, Xun and Kuosmanen, Timo},
    month = {5},
    pages = {104758},
    volume = {88},
    issn = {01409883},
    keywords = {Abatement cost, Climate policy, Convex quantile regression, Forward-looking assessment, Regional disparity}
}

@article{Nataraja2011,
    title = {{Guidelines for using variable selection techniques in data envelopment analysis}},
    year = {2011},
    journal = {European Journal of Operational Research},
    author = {Nataraja, Niranjan R. and Johnson, Andrew L.},
    month = {12},
    pages = {662--669},
    volume = {215},
    publisher = {North-Holland},
    issn = {03772217},
    keywords = {Data envelopment analysis, Efficiency estimation, Model specification}
}

@article{Kuosmanen2020,
    title = {{How much climate policy has cost for OECD countries?}},
    year = {2020},
    journal = {World Development},
    author = {Kuosmanen, Timo and Zhou, Xun and Dai, Sheng},
    month = {1},
    pages = {104681},
    volume = {125},
    issn = {0305750X}
}

@article{Adler2010a,
    title = {{Improving discrimination in data envelopment analysis: PCA-DEA or variable reduction}},
    year = {2010},
    journal = {European Journal of Operational Research},
    author = {Adler, Nicole and Yazhemsky, Ekaterina},
    month = {4},
    pages = {273--284},
    volume = {202},
    publisher = {North-Holland},
    issn = {03772217},
    keywords = {Data envelopment analysis, Discrimination, Principal component analysis, Simulation}
}

@unpublished{Qin2014,
    title = {{Joint variable selection for data envelopement analysis via group sparsity}},
    year = {2014},
    author = {Qin, Zhiwei and Song, Irene},
    month = {2},
    url = {http://arxiv.org/abs/1402.3740},
    arxivId = {1402.3740}
}

@article{Dunning2017,
    title = {{JuMP: A modeling language for mathematical optimization}},
    year = {2017},
    journal = {SIAM review},
    author = {Dunning, Iain and Huchette, Joey and Lubin, Miles},
    pages = {295--320},
    volume = {59},
    publisher = {SIAM},
    issn = {0036-1445}
}

@article{Lee2020,
    title = {{LASSO variable selection in data envelopment analysis with small datasets}},
    year = {2020},
    journal = {Omega},
    author = {Lee, Chia Yen and Cai, Jia Ying},
    month = {3},
    pages = {102019},
    volume = {91},
    publisher = {Elsevier Ltd},
    issn = {03050483},
    keywords = {Convex nonparametric least squares, Data envelopment analysis, Efficiency estimation, Feature selection, Lasso}
}

@article{Chen2021,
    title = {{LASSO+DEA for small and big wide data}},
    year = {2021},
    journal = {Omega},
    author = {Chen, Ya and Tsionas, Mike G. and Zelenyuk, Valentin},
    month = {1},
    pages = {102419},
    volume = {102},
    publisher = {Pergamon},
    issn = {03050483},
    keywords = {Big wide data, DEA, Data enabled analytics, Elastic net, LASSO, Machine learning, Sign-constrained convex nonparametric least square}
}

@article{Dalampira2020,
    title = {{Mapping sustainable development goals: A network analysis framework}},
    year = {2020},
    journal = {Sustainable Development},
    author = {Dalampira, Evropi Sofia and Nastis, Stefanos A.},
    month = {1},
    pages = {46--55},
    volume = {28},
    publisher = {John Wiley and Sons Ltd},
    issn = {10991719},
    keywords = {Sustainable Development Goals, indicators, network analysis, sustainability, sustainable development}
}

@article{Costanza2016,
    title = {{Modelling and measuring sustainable wellbeing in connection with the UN Sustainable Development Goals}},
    year = {2016},
    journal = {Ecological Economics},
    author = {Costanza, Robert and Daly, Lew and Fioramonti, Lorenzo and Giovannini, Enrico and Kubiszewski, Ida and Mortensen, Lars Fogh and Pickett, Kate E. and Ragnarsdottir, Kristin Vala and De Vogli, Roberto and Wilkinson, Richard},
    month = {10},
    pages = {350--355},
    volume = {130},
    publisher = {Elsevier B.V.},
    issn = {09218009},
    keywords = {Ecosystem services, GPI, Natural capital, Social capital, Wellbeing}
}

@inproceedings{Balazs2015,
    title = {{Near-optimal max-affine estimators for convex regression}},
    year = {2015},
    booktitle = {18th Artificial Intelligence and Statistics},
    author = {Bal{\'{a}}zs, Gábor and Gy{\"{o}}rgy, András and Szepesv{\'{a}}ri, Csaba},
    pages = {38:56–64},
    publisher = {PMLR}
}

@article{Wang2014c,
    title = {{Nonparametric quantile frontier estimation under shape restriction}},
    year = {2014},
    journal = {European Journal of Operational Research},
    author = {Wang, Yongqiao and Wang, Shouyang and Dang, Chuangyin and Ge, Wenxiu},
    pages = {671--678},
    volume = {232},
    publisher = {Elsevier B.V.},
    isbn = {03772217},
    issn = {03772217},
    keywords = {Concavity, Non-crossing, Production frontier, Productivity and competitiveness, Quantile regression, Shape restriction}
}

@article{Chen2020c,
    title = {{On degrees of freedom of projection estimators with applications to multivariate nonparametric regression}},
    year = {2020},
    journal = {Journal of the American Statistical Association},
    author = {Chen, Xi and Lin, Qihang and Sen, Bodhisattva},
    number = {529},
    month = {1},
    pages = {173--186},
    volume = {115},
    publisher = {American Statistical Association},
    issn = {1537274X},
    arxivId = {1509.01877},
    keywords = {Additive model, Bounded isotonic regression, Divergence of an estimator, Generalized Lasso, Multivariate convex regression}
}

@article{Stone2007,
    title = {{Optimal rates of convergence for nonparametric estimators}},
    year = {1980},
    journal = {The Annals of Statistics},
    author = {Stone, Charles J},
    pages = {1348--1360},
    volume = {8},
    isbn = {202110:45:32},
    issn = {0090-5364}
}

@article{Dyson2001,
    title = {{Pitfalls and protocols in DEA}},
    year = {2001},
    journal = {European Journal of Operational Research},
    author = {Dyson, R. G. and Allen, R. and Camanho, A. S. and Podinovski, V. V. and Sarrico, C. S. and Shale, E. A.},
    month = {7},
    pages = {245--259},
    volume = {132},
    publisher = {North-Holland},
    issn = {03772217},
    keywords = {Data envelopment analysis, Methodology, Performance}
}

@article{Behr2010,
    title = {{Quantile regression for robust bank efficiency score estimation}},
    year = {2010},
    journal = {European Journal of Operational Research},
    author = {Behr, Andreas},
    pages = {568--581},
    volume = {200},
    publisher = {Elsevier B.V.},
    issn = {03772217},
    keywords = {Banking, Efficiency, Quantile regression}
}

@article{Efron1991,
    title = {{Regression percentiles using asymmetric squared error loss}},
    year = {1991},
    journal = {Statistica Sinica},
    author = {Efron, B},
    pages = {93--125},
    volume = {1},
    keywords = {absolute resid ual regression, and phrases: Conditional percentiles, heteroscedasticity, regression quantiles}
}

@article{Koenker1978,
    title = {{Regression quantiles}},
    year = {1978},
    journal = {Econometrica},
    author = {Koenker, Roger and Bassett, Gilbert},
    pages = {33--50},
    volume = {46},
    publisher = {JSTOR},
    issn = {0012-9682}
}

@article{Tibshirani1996,
    title = {{Regression shrinkage and selection via the Lasso}},
    year = {1996},
    journal = {Journal of the Royal Statistical Society. Series B (Methodological)},
    author = {Tibshirani, Robert},
    pages = {267--288},
    volume = {58}
}

@article{Kuosmanen2008,
    title = {{Representation theorem for convex nonparametric least squares}},
    year = {2008},
    journal = {Econometrics Journal},
    author = {Kuosmanen, Timo},
    month = {7},
    pages = {308--325},
    volume = {11},
    issn = {13684221},
    keywords = {Concavity, Convexity, Curve fitting, Linear splines, Local linear approximation, Nonparametric methods, Regression analysis}
}

@book{Sachs2017,
    title = {{SDG index and dashboards report 2017}},
    year = {2017},
    booktitle = {New York: Bertelsmann Stiftung and Sustainable Development Solutions Network (SDSN)},
    author = {Sachs, Jeffrey and Schmidt-Traub, Guido and Kroll, Christian and Durand-Delacre, David and Teksoz, Katerina},
    publisher = {Bertelsmann Stiftung and Sustainable Development Solutions Network (SDSN)},
    address = {New York}
}

@article{Keshvari2018,
    title = {{Segmented concave least squares: A nonparametric piecewise linear regression}},
    year = {2018},
    journal = {European Journal of Operational Research},
    author = {Keshvari, Abolfazl},
    month = {4},
    pages = {585--594},
    volume = {266},
    publisher = {Elsevier B.V.},
    issn = {03772217},
    keywords = {Concave least squares, Decision analysis, Finland, Hedonic pricing, Ordinary least squares}
}

@article{Kuosmanen2020b,
    title = {{Shadow prices and marginal abatement costs: Convex quantile regression approach}},
    year = {2021},
    journal = {European Journal of Operational Research},
    author = {Kuosmanen, Timo and Zhou, Xun},
    month = {7},
    pages = {666--675},
    volume = {289},
    publisher = {Elsevier BV},
    issn = {03772217},
    keywords = {Data envelopment analysis, Environmental performance, Nonparametric regression, Production theory, Undesirable outputs}
}

@article{Natarajan1995,
    title = {{Sparse approximate solutions to linear systems}},
    year = {1995},
    journal = {SIAM Journal on Computing},
    author = {Natarajan, B K},
    pages = {227--234},
    volume = {24},
    issn = {00975397}
}

@unpublished{Bertsimas2017,
    title = {{Sparse Classification: A scalable discrete optimization perspective}},
    year = {2017},
    author = {Bertsimas, Dimitris and Pauphilet, Jean and Van Parys, Bart},
    month = {10},
    url = {https://arxiv.org/abs/1710.01352},
    arxivId = {1710.01352}
}

@article{Bertsimas2020,
    title = {{Sparse convex regression}},
    year = {2021},
    journal = {INFORMS Journal on Computing},
    author = {Bertsimas, Dimitris and Mundru, Nishanth},
    month = {6},
    pages = {262--279},
    volume = {33},
    publisher = {Institute for Operations Research and the Management Sciences (INFORMS)},
    issn = {15265528},
    keywords = {Applications, Large-scale systems, Linear programming, Regression, Statistics}
}

@article{Bertsimas2020b,
    title = {{Sparse high-dimensional regression: Exact scalable algorithms and phase transitions}},
    year = {2020},
    journal = {Annals of Statistics},
    author = {Bertsimas, Dimitris and van Parys, Bart},
    pages = {300--323},
    volume = {48},
    publisher = {Institute of Mathematical Statistics},
    issn = {21688966},
    arxivId = {1709.10029},
    keywords = {Best subset selection, Convex optimization, Integer optimization, Kernel learning, Sparse regression}
}

@article{Wagner2007,
    title = {{Stepwise selection of variables in data envelopment analysis: Procedures and managerial perspectives}},
    year = {2007},
    journal = {European Journal of Operational Research},
    author = {Wagner, Janet M. and Shimshak, Daniel G.},
    month = {7},
    pages = {57--67},
    volume = {180},
    publisher = {North-Holland},
    issn = {03772217},
    keywords = {Data envelopment analysis, Data reduction, Efficiency measurements}
}

@incollection{Kuosmanen2015d,
    title = {{Stochastic nonparametric approach to efficiency analysis: A unified framework}},
    year = {2015},
    booktitle = {Data Envelopment Analysis},
    author = {Kuosmanen, Timo and Johnson, Andrew and Saastamoinen, Antti},
    editor = {Zhu, Joe},
    chapter = {7},
    pages = {191--244},
    publisher = {Springer},
    address = {Boston, MA}
}

@article{Fan2010,
    title = {{Sure independence screening in generalized linear models with NP-dimensionality}},
    year = {2010},
    journal = {Annals of Statistics},
    author = {Fan, Jianqing and Song, Rui},
    month = {12},
    pages = {3567--3604},
    volume = {38},
    issn = {00905364},
    keywords = {Generalized linear models, Independent learning, Sure independent screening, Variable selection}
}

@book{Sachs2019,
    title = {{Sustainable development report 2019}},
    year = {2019},
    booktitle = {New York: Bertelsmann Stiftung and Sustainable Development Solutions Network (SDSN)},
    author = {Sachs, Jeffrey and Schmidt-Traub, Guido and Kroll, Christian and Lafortune, Guillaume and Fuller, Grayson},
    publisher = {Bertelsmann Stiftung and Sustainable Development Solutions Network (SDSN)},
    address = {New York}
}

@book{Sachs2020,
    title = {{The Sustainable Development Goals and COVID-19. Sustainable Development Report 2020}},
    year = {2020},
    booktitle = {Sustainable Development Report 2020},
    author = {Sachs, Jeffrey and Schmidt-Traub, Guido and Kroll, Christian and Lafortune, Guillaume and Fuller, Grayson and Woelm, Finn},
    publisher = {Cambridge University Press},
    address = {Cambridge}
}

@article{Singpai,
    title = {{Using a DEA-AutoML approach to track SDG achievements}},
    year = {2020},
    journal = {Sustainability},
    author = {Singpai, Bodin and Wu, Desheng},
    pages = {1--26},
    volume = {12},
    issn = {20711050},
    keywords = {Automated machine learning (AutoML), Belt and Road Initiative (BRI), Coronavirus disease (COVID-19), Data envelopment analysis (DEA), Sustainable Development Goals (SDGs)}
}

@article{Homburg2001,
    title = {{Using data envelopment analysis to benchmark activities}},
    year = {2001},
    journal = {International Journal of Production Economics},
    author = {Homburg, Carsten},
    month = {8},
    pages = {51--58},
    volume = {73},
    publisher = {Elsevier},
    issn = {09255273},
    keywords = {Activities, Activity-based management, Benchmarking, Data envelopment analysis}
}

@article{Li2017,
    title = {{Variable selection in data envelopment analysis via Akaike's information criteria}},
    year = {2017},
    journal = {Ann Oper Res},
    author = {Li, Yongjun and Shi, Xiao and Yang, Min and Liang, Liang},
    pages = {453--476},
    volume = {253},
    keywords = {Akaike’s information criteria (AIC), Data envelopment analysis (DEA), Variables set selection}
}

%%%%%%%%%%%%%%%%%%%%%%%%%%%%%%%%%%%%%%%%%%%%%%%%
%%
%%
%%
%%%%%%%%%%%%%%%%%%%%%%%%%%%%%%%%%%%%%%%%%%%%%%%%
\clearpage
\newpage
\baselineskip 20pt
\section*{Appendix}\label{sec:app}

%%%%%%%%%%%%%%%%%%%%%%%%%%%%%%%%%%%%%%%%%%%%%%%%
%%
%%
%%
%%%%%%%%%%%%%%%%%%%%%%%%%%%%%%%%%%%%%%%%%%%%%%%%

\renewcommand{\thesubsection}{\Alph{subsection}}
\renewcommand{\thefigure}{A\arabic{figure}}
\setcounter{figure}{0}
\renewcommand{\thetable}{A\arabic{table}}
\setcounter{table}{0}
\renewcommand{\theequation} {A.\arabic{equation}}
\setcounter{equation}{0}
\setcounter{footnote}{0}

\subsection{The CNLS-A algorithm}\label{sec:anlsa}

The CNLS-A algorithm, an extension of \citeasnoun{Bertsimas2020}, is designed for improving the computational efficiency in solving the penalized CQR and CER problems. Specifically, to speed up the calculation, the following strategy is applied: solving the reduced master problem 
that contains only a few constraints and then iteratively adding the violated constraints in a delayed manner.\footnote{As in \citeasnoun{Bertsimas2020}, we use the same notations: the master problem is the original problem with $n^2$ linear constraints (i.e., the formulation (\ref{eq:a1})); the reduced master problem is a problem with the same objective function and decision variables but with a subset of constraints.
}

For the sake of elaboration, the CQR problem is taken as an example to compare the computational efficiency between the CNLS-A and CNLS-G algorithms. Therefore, the CQR problem is defined as (\citename{Kuosmanen2015d}, \citeyear*{Kuosmanen2015d}) 
\begin{alignat}{2}
 \underset{\mathbf{\alpha},\mathbf{\bbeta },{{\mathbf{\varepsilon }}^{\text{+}}},{{\mathbf{\varepsilon }}^{-}}}{\mathop{\min }}&\,\tau \sum\limits_{i=1}^{n}{\varepsilon _{i}^{+}}+(1-\tau )\sum\limits_{i=1}^{n}{\varepsilon _{i}^{-}}  &{}& \label{eq:a1}\\
\mbox{\textit{s.t.}}\quad
& y_i=\mathbf{\alpha}_i+ \bbeta_i^{'}\bx_i+\varepsilon^{+}_i - \varepsilon^{-}_i &\quad& \forall i \notag\\
& \mathbf{\alpha}_i+\bbeta_{i}^{'}{{\bx}_{i}}\le \mathbf{\alpha}_h+\bbeta _h^{'}\bx_i  &{}& \forall i,h  \notag\\
& \bbeta_i\ge \bzero &{}& \forall i  \notag\\
& \varepsilon _i^{+}\ge 0,\ \varepsilon_i^{-} \ge 0 &{}& \forall i \notag
\end{alignat}

To adapt the existing cutting-plan algorithm (\citename{Bertsimas2020}, \citeyear*{Bertsimas2020}), the CQR problem first needs to be reformulated. The alternative CQR problem is
\begin{alignat}{2}
  \underset{\hat{y},\mathbf{\beta },{{\mathbf{\varepsilon }}^{\text{+}}},{{\mathbf{\varepsilon }}^{-}}}{\mathop{\min }}&\,\tau \sum\limits_{i=1}^{n}{\varepsilon _{i}^{+}}+(1-\tau )\sum\limits_{i=1}^{n}{\varepsilon _{i}^{-}} &{}& \label{eq:a2}\\ 
\mbox{\textit{s.t.}}\quad
& y_i - \hat{y}_i = \varepsilon _{i}^{+} - \varepsilon _{i}^{-} &\quad& \forall i \notag \\
& \hat{y}_i + \bbeta_i^{'}(\bx_j -\bx_i) \ge \hat{y}_j &\quad& \forall i, j \notag\\
& \bbeta_i\ge \bzero &{}& \forall i  \notag\\
& \varepsilon _i^{+}\ge 0,\ \varepsilon_i^{-} \ge 0 &{}& \forall i \notag
\end{alignat}
where $\hat{y}_i$, a new decision variable, is the estimated quantile function. Note that the alternative formulation \eqref{eq:a2} is derived from Problem \eqref{eq:a1} by applying a mathematical transformation. For the formulation \eqref{eq:a1}, $\hat{y}_i$ is used to replace $\mathbf{\alpha}_i+ \bbeta_i^{'}\bx_i$ in the first constraint, and the Afriat equality constraint is reconstructed by substituting $\hat{y}_i$ for the left-hand side and substituting out $\mathbf{\alpha}_j$ using $\mathbf{\alpha}_j = \hat{y}_j-\bbeta_j^{'}\bx_j$ for the right-hand side. 

There are two commonly used approaches to form the initial $n-1$ constraints: the spanning path (SP) based on the Euclidean distances between observations and the minimum spanning tree (MST) among observations. The initial reduced master problem of CQR is:
\begin{alignat}{2}
  \underset{\hat{y},\mathbf{\beta },{{\mathbf{\varepsilon }}^{\text{+}}},{{\mathbf{\varepsilon }}^{-}}}{\mathop{\min }}&\,\tau \sum\limits_{i=1}^{n}{\varepsilon _{i}^{+}}+(1-\tau )\sum\limits_{i=1}^{n}{\varepsilon _{i}^{-}} &{}&  \label{eq:a3}\\ 
\mbox{\textit{s.t.}}\quad
& \hat{y}_{i_1} + \bbeta_{i_1}^{'}(\bx_{i_2} -\bx_{i_1}) \ge \hat{y}_{i_2} \notag\\
& \hat{y}_{i_2} + \bbeta_{i_2}^{'}(\bx_{i_3} -\bx_{i_2}) \ge \hat{y}_{i_3} \notag \\
& \vdots \notag \\
& \hat{y}_{i_{n-1}} + \bbeta_{i_{n-1}}^{'}(\bx_{i_n} -\bx_{i_{n-1}}) \ge \hat{y}_{i_n} \notag\\
& y_i - \hat{y}_i = \varepsilon _{i}^{+} - \varepsilon _{i}^{-} &\quad& \forall i \notag \\
& \bbeta_i\ge \bzero &{}& \forall i \notag \\
& \varepsilon _i^{+}\ge 0,\ \varepsilon_i^{-} \ge 0 &{}& \forall i \notag
\end{alignat}

For a given solution (i.e., $\hat{y}, \bbeta^{'}_{i}, \varepsilon_i^{-}, \varepsilon_i^{+}$) to the reduced master problem \eqref{eq:a3}, we need to check whether it is a feasible solution to the master problem \eqref{eq:a2}. If yes, it is also the optimal solution for the master problem \eqref{eq:a2}; otherwise, we have to efficiently find a violated Afriat constraint. Finding a violated constraint is regarded as a separation problem (\citename{Bertsimas2020}, \citeyear*{Bertsimas2020}). Therefore, the separation problem finds the minimal index $m(i)$ for each observation $i$ and checks if the corresponding smallest value is less than zero. 
\begin{alignat}{2}
m(i) = \operatorname*{arg\,min}_{1 \le m \le n} \{\hat{y}_i - \hat{y}_m + \bbeta_{i}^{'}(\bx_m-\bx_i)\}&{}& \label{eq:a4}
\end{alignat}

Furthermore, the following violated constraint is added to the reduced master problem \eqref{eq:a3} only if it is smaller than a given tolerance.
\begin{alignat}{2}
\hat{y}_i + \bbeta_{i}^{'}(\bx_{m(i)}-\bx_i) \ge \hat{y}_{m(i)} &{}&  \label{eq:a5}
\end{alignat}

The $m$\textsuperscript{th} iteration by $T_m$ is then solved, and if $\min_{1 \le m \le n}\{\hat{y}_i - \hat{y}_m + \bbeta_{i}^{'}(\bx_m-\bx_i)\} \ge -tol$, a final optimal solution to the master problem \eqref{eq:a2} is found. The $m$\textsuperscript{th} iteration problem is as follows:
\begin{alignat}{2}
  \underset{\hat{y},\mathbf{\beta },{{\mathbf{\varepsilon }}^{\text{+}}},{{\mathbf{\varepsilon }}^{-}}}{\mathop{\min }}&\,\tau \sum\limits_{i=1}^{n}{\varepsilon _{i}^{+}}+(1-\tau )\sum\limits_{i=1}^{n}{\varepsilon _{i}^{-}} &{}&  \label{eq:a6}\\ 
\mbox{\textit{s.t.}}\quad
& \hat{y}_i + \bbeta_i^{'}(\bx_j -\bx_i) \ge \hat{y}_j &\quad& \forall (i, j) \in T_0 \notag \\
& \hat{y}_i + \bbeta_i^{'}(\bx_j -\bx_i) \ge \hat{y}_j &\quad& \forall (i, j) \in T_1 \notag\\
& \vdots \notag \\
& \hat{y}_i + \bbeta_i^{'}(\bx_j -\bx_i) \ge \hat{y}_j &\quad& \forall (i, j) \in T_m \notag
\end{alignat}

The CNLS-A algorithm procedure is summarized as:
\vspace{0.5em}

\begin{algorithm}[H]\label{alg1} 
  \KwData{$\{(\bx_i, y_i) \in \real^d \times \real: i =1,\cdots, n\}$ and $tol = 0.01$}
  $\mathrm{out} = 0$ and $m = 0$\;
  Solve the reduced master problem \eqref{eq:a6}\;
  \While{$\mathrm{out} = 0$}{
    \For{$1 \le i \le n$}{
    Solve the separation problem \eqref{eq:a4} to find a minimal index $m(i)$\;
    Add the corresponding violated constraint \eqref{eq:a5} to Problem \eqref{eq:a6}\;
    }
    \eIf{$\min_{1 \le m \le n}\{\hat{y}_i - \hat{y}_m + \bbeta_{i}^{'}(\bx_m-\bx_i)\} < -tol$}{
    Resolve the reduced master problem \eqref{eq:a6} with the newly added constraint set $T_{m+1}$\;
    }{
    $\mathrm{out} = 1$\;
    }
    $m=m+1$\;
 }
 \KwResult{$\hat{y}, \bbeta^{'}_{i}, \varepsilon_i^{-}, \varepsilon_i^{+}$ }
\caption{CNLS-A algorithm for solving the CQR problem \eqref{eq:a2}.}
\end{algorithm}

\bigskip

Fig.~\ref{fig:figA1} depicts a performance comparison between the CNLS-G and CNLS-A algorithms in solving the CQR and CER problems.\footnote{In this paper, we slightly adapt the CNLS-G algorithm that is design for the usual \texttt{CNLS} problem (\citename{Lee2013}, \citeyear*{Lee2013}) to solve the CQR problem, and we use our developed pyStoNED package (https://github.com/ds2010/pyStoNED) to run the CNLS-G algorithm.
}
Since there is no systematic difference in quantile (see the MC study in Section \ref{sec:mc}), we consider only the effect of the number of observations and the number of input variables. We repeat the simulation 10 times to average the metrics of interest (i.e., running time and number of constraints). Furthermore, we use the MST approach to form the initial reduced constraints. 

As expected, the CNLS-A algorithm requires less time to solve the CQR and CER problems than the CNLS-G algorithm does. In contrast to the running time of CNLS-A, CNLS-G increases rapidly as the input dimension ($d$) increases, especially when the high-dimensional data are estimated with a large sample size ($n>300$). For example, according to the upper-right panel in Fig.~\ref{fig:figA1}, we find that the CNLS-G algorithm takes 25 times and 124 times longer than the CNLS-A algorithm in solving the CQR problem with $n=100$ and $n=500$, respectively. Compared to the CQR approach, CER runs faster in both algorithms due to the stronger convexity in the objective function. 
\begin{figure}[!htbp]
    %\vspace{-2em}
    \centering
    \includegraphics[width=1\textwidth]{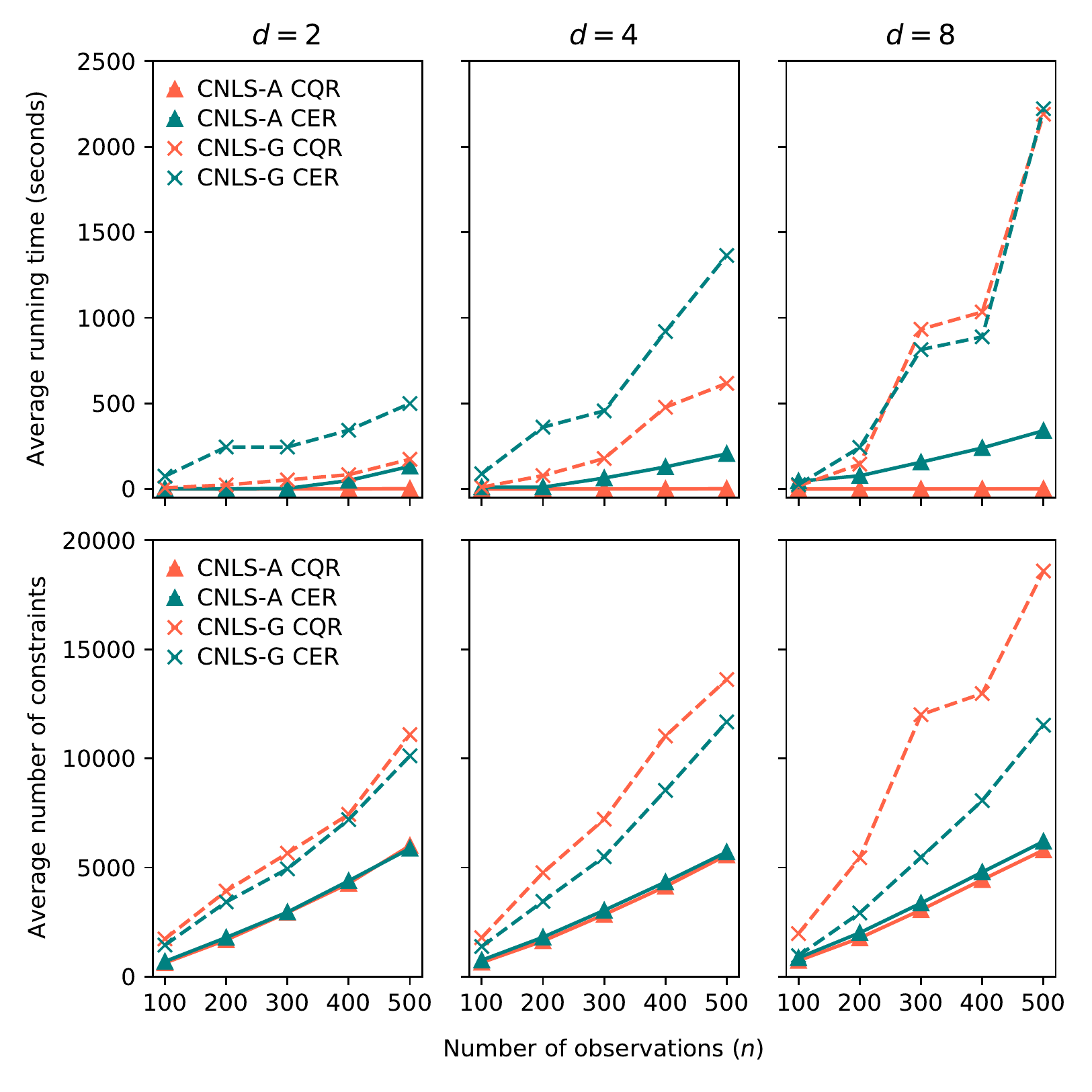}
    \caption{Performance comparison of the CNLS-A and CNLS-G algorithms to solve the CQR and CER problems with $\tau = 0.9$.}
    \label{fig:figA1}
    %\vspace{-2em}
\end{figure}

Regarding the average number of constraints, similar to the average running time, the CNLS-A algorithm can more efficiently find the violated constraints than can the CNLS-G algorithm in all the scenarios. CNLS-G includes an incremental increase in the constraints in the optimization as the number of input dimensions or the number of observations increases. By contrast, the constraints of the CNLS-A algorithm undergo a very small change. Furthermore, the CER approach uses less constraints than does the CQR approach. 

%%%%%%%%%%%%%%%%%%%%%%%%%%%%%%%%%%%%%%%%%%%%%%%%
%%
%%
%%
%%%%%%%%%%%%%%%%%%%%%%%%%%%%%%%%%%%%%%%%%%%%%%%%

\newpage
\subsection{Additional tables and figures}
\renewcommand{\thefigure}{B\arabic{figure}}
\setcounter{figure}{0}
\renewcommand{\thetable}{B\arabic{table}}
\setcounter{table}{0}

\begin{table}[ht]
    \renewcommand\arraystretch{1.15}
    \caption{Summary of statistics for the input and output variables}\label{tab:stats}
    \centering
    \begin{threeparttable}
    \begin{tabular}{lllrrrrr}
    \toprule
    Variable   &    & Unit  & Mean  & Median & Min.   & Max.   & Std.Dev. \\
    \midrule
       &   \vspace{-1.5em}\\  
    {\bf Economic:}    \\
            $y$ (GDP)&       & 10$^{12}$ international \$     & 1.69  & 0.50  & 0.02  & 20.59  & 3.44  \\
                &   \vspace{-1.5em}\\   
    {\bf Social:}     \\    
            $x_1$    &    & \%        & 0.44  & 0.29  & 0.00  & 2.18  & 0.44  \\
            $x_2$    &    & years     & 77.53  & 79.20  & 66.20  & 84.20  & 5.16  \\
            $x_3$    &    & ladder score, 0-10 & 6.70  & 6.88  & 5.19  & 7.86  & 0.70  \\
            $x_4$ (Employment)    &    & 10$^9$ persons         & 16.38  & 4.63  & 0.14  & 153.55  & 27.40  \\
            $x_5$    &    & 1-100         & 35.21  & 33.74  & 25.59  & 57.83  & 6.94  \\
            &   \vspace{-1.5em}\\   
    {\bf Environmental:}    \\    
            $x_6$    &    & $\mu$g/m$^3$  & 13.79  & 12.03  & 5.40  & 44.31  & 7.27  \\
            $x_7$    &    & kg/day/capita & 1.91   & 1.89   & 0.88  & 4.54   & 0.65  \\
            $x_8$    &    & kg/capita     & 18.19  & 19.80  & 6.50  & 28.50  & 5.17  \\
            $x_9$    &    & kg/capita     & 42.22  & 25.15  & 1.70  & 344.94 & 50.90  \\
            $x_{10}$ &    & kg/capita     & 46.35  & 42.27  & 25.19 & 139.80 & 19.84  \\
            $x_{11}$ &    & tCO$_2$/capita& 7.96   & 7.36   & 3.44  & 18.70  & 3.65  \\
            $x_{12}$ &    & EUR/tCO$_2$   & 18.38  & 12.47  & 0.01  & 66.95  & 16.12  \\
    \bottomrule
    \end{tabular}%
    \begin{tablenotes}
        \setlength\labelsep{0pt}
        \footnotesize
        \item Notes: 1) To calculate the absolute values of the population and GDP, we also collect two additional indicators: GDP and PPP (constant 2017 international \$).
    \end{tablenotes}
    \end{threeparttable}
\end{table}

\begin{figure}[!htbp]
    %\vspace{-1em}
    \centering
    \includegraphics[width=0.95\textwidth]{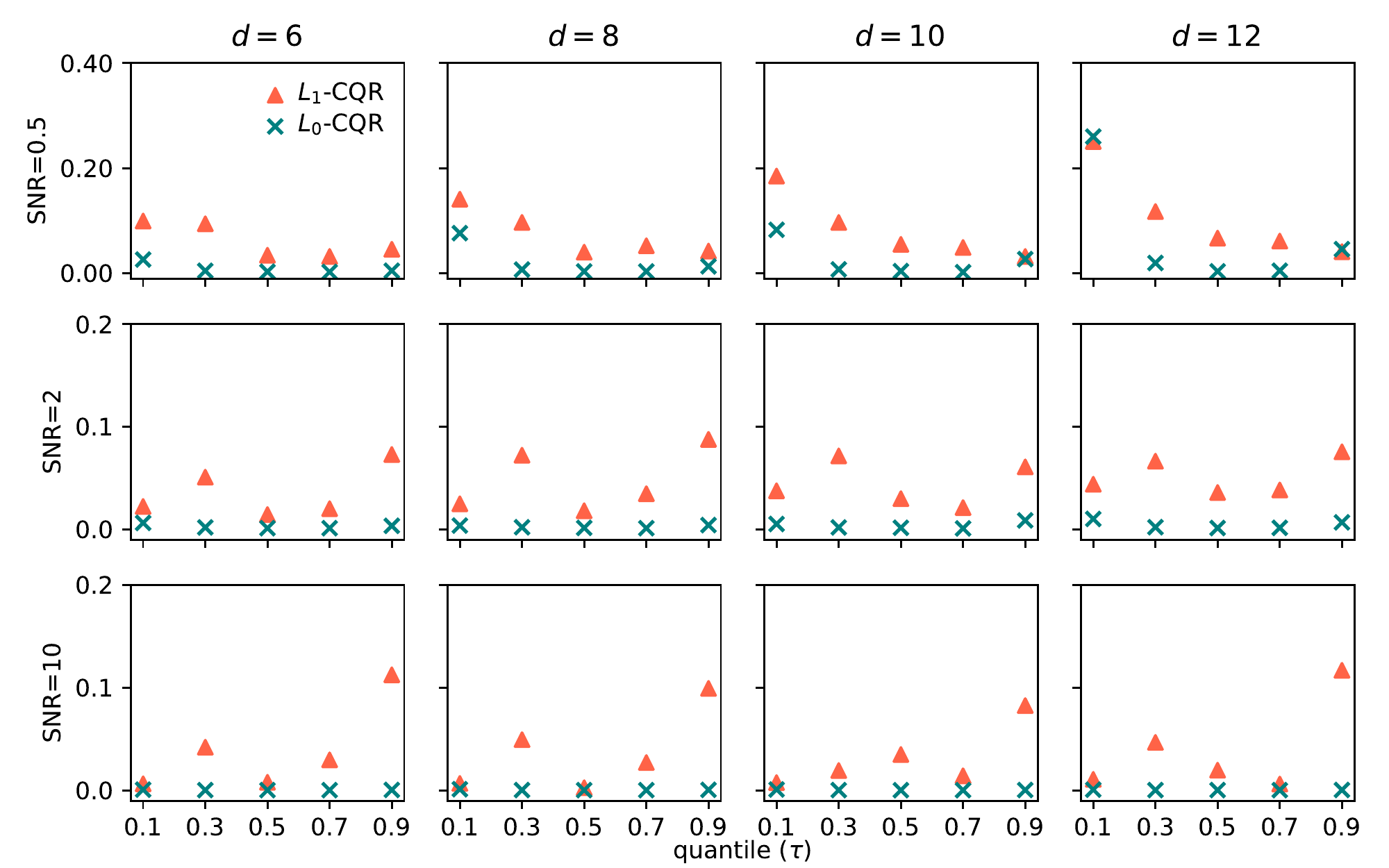}
    \caption{Prediction errors of the $\mathcal{L}_1$-CQR and $\mathcal{L}_0$-CQR approaches with $n = 500$ and $k_{\text{true}} = 2$.}
    \label{fig:figb1}
    \vspace{-1em}
\end{figure}

\begin{figure}[!htbp]
    %\vspace{-1em}
    \centering
    \includegraphics[width=0.95\textwidth]{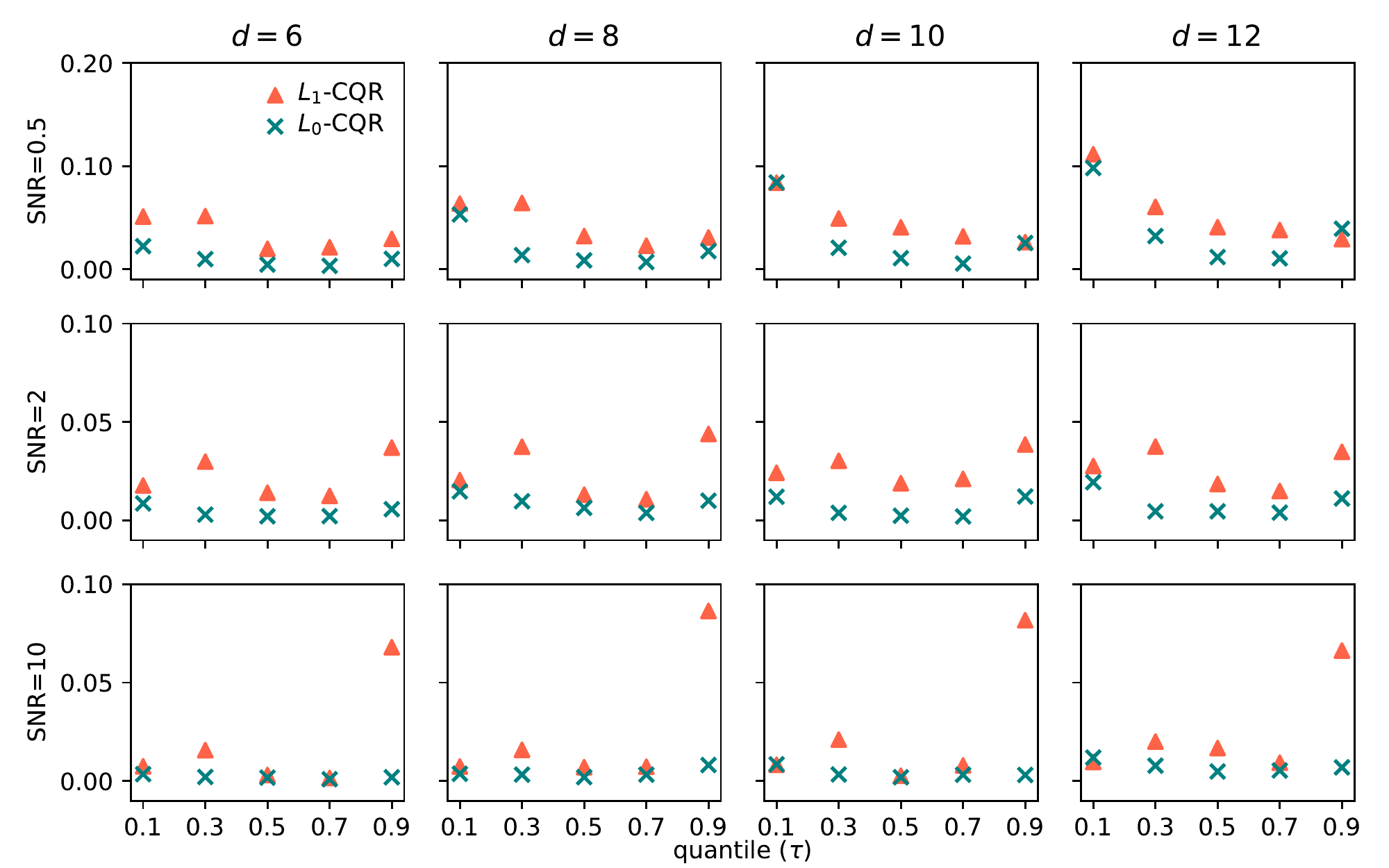}
    \caption{Prediction errors of the $\mathcal{L}_1$-CQR and $\mathcal{L}_0$-CQR approaches with $n = 500$ and $k_{\text{true}} = 4$.}
    \label{fig:figb2}
    \vspace{-1em}
\end{figure}

\begin{figure}[!htbp]
  %  \vspace{-1em}
    \centering
    \includegraphics[width=0.95\textwidth]{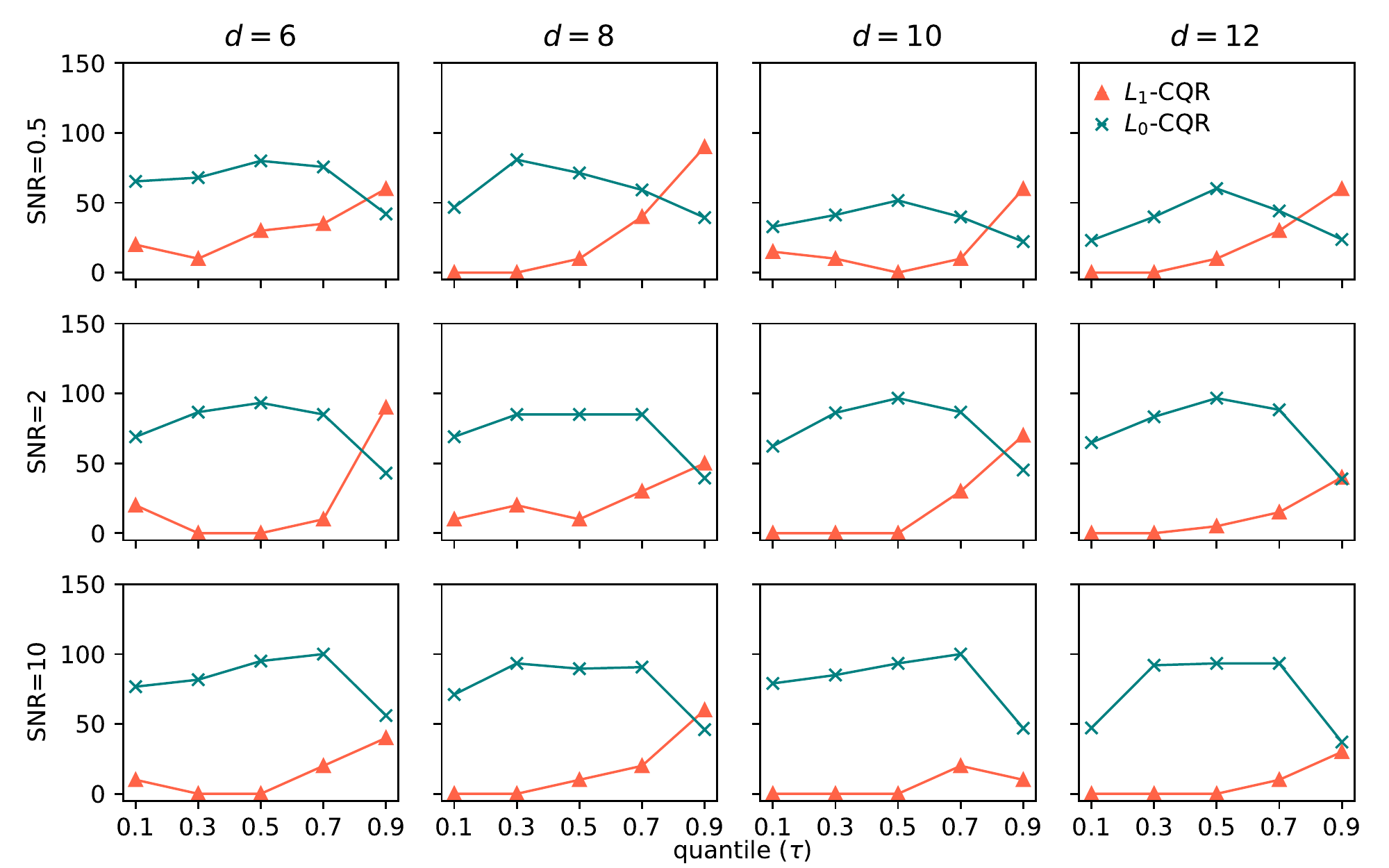}
    \caption{Accuracies of the $\mathcal{L}_1$-CQR and $\mathcal{L}_0$-CQR approaches with $n = 100$ and $k_{\text{true}} = 2$.}
    \label{fig:figb3}
    \vspace{-1em}
\end{figure}

\begin{figure}[!htbp]
  %  \vspace{-1em}
    \centering
    \includegraphics[width=0.95\textwidth]{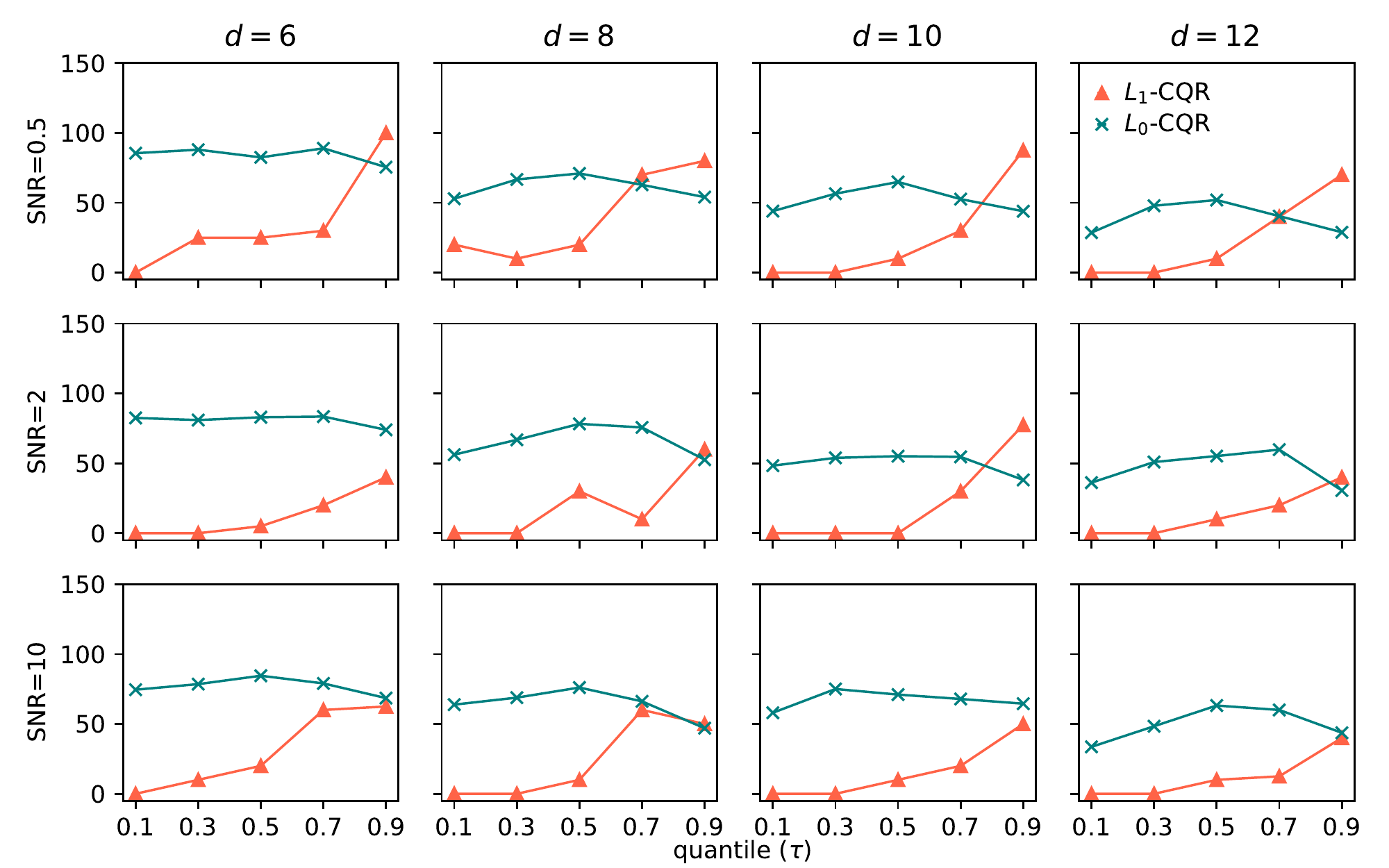}
    \caption{Accuracies of the $\mathcal{L}_1$-CQR and $\mathcal{L}_0$-CQR approaches with $n = 100$ and $k_{\text{true}} = 4$.}
    \label{fig:figb4}
    \vspace{-1em}
\end{figure}

\begin{figure}[!htbp]
  %  \vspace{-1em}
    \centering
    \includegraphics[width=0.95\textwidth]{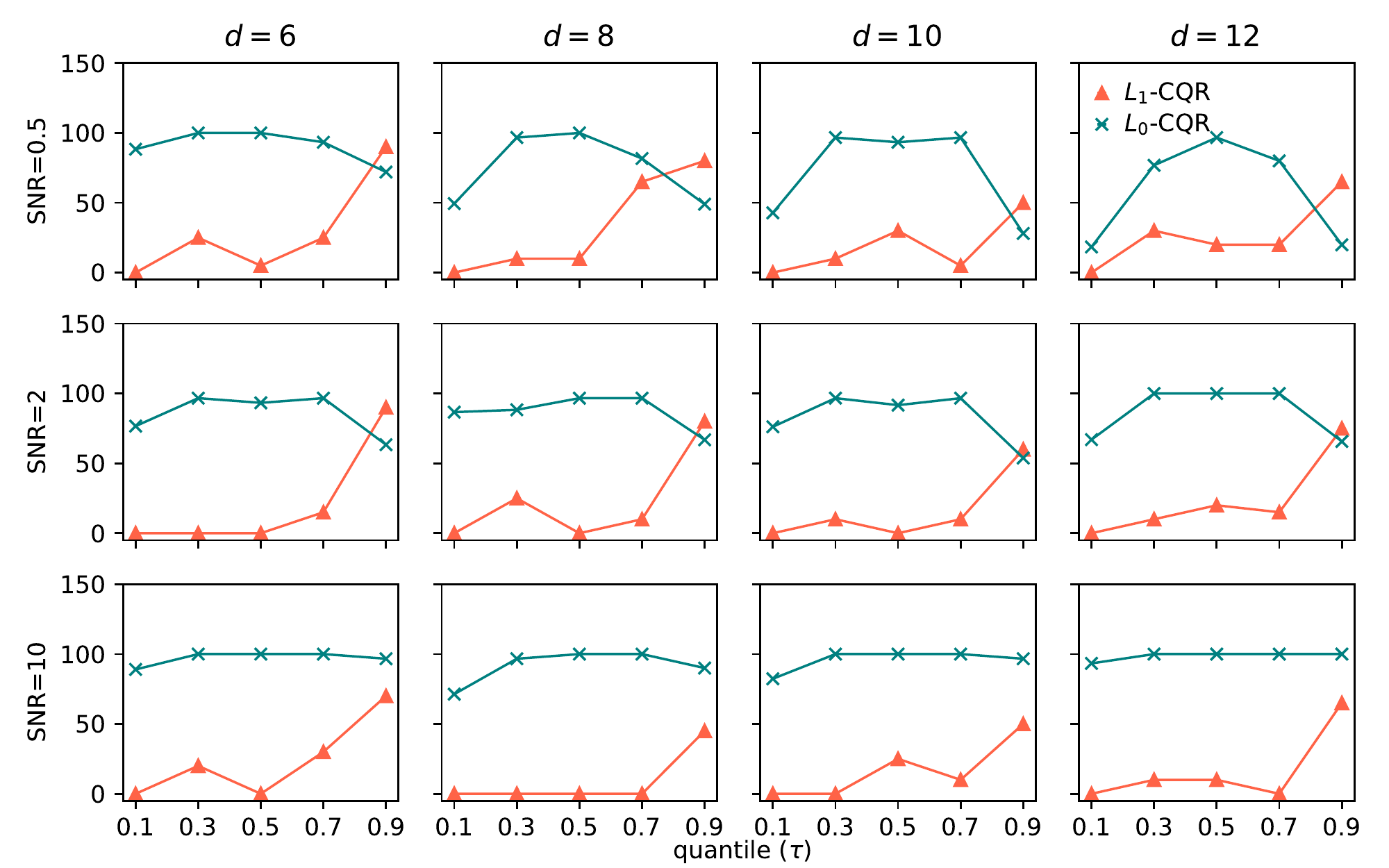}
    \caption{Accuracies of the $\mathcal{L}_1$-CQR and $\mathcal{L}_0$-CQR approaches with $n = 500$ and $k_{\text{true}} = 2$.}
    \label{fig:figb5}
    \vspace{-1em}
\end{figure}

\end{document}